\documentclass[12pt]{article}

\usepackage[T1]{fontenc}
\usepackage{lmodern}

\usepackage{setspace}
\usepackage[margin=1.25in]{geometry}

\usepackage[dvipsnames]{xcolor}
\usepackage{pdfpages}
\usepackage{float}
\usepackage{multirow}
\usepackage{caption}
\usepackage{subcaption}
\usepackage{epigraph}

\usepackage{mathtools}
\usepackage{amssymb}
\usepackage{amsthm}
\usepackage{amsfonts}
\usepackage{aliascnt}
\usepackage{accents}
\usepackage{dutchcal}
\usepackage{bbm}

\usepackage{enumitem}
\usepackage{sgame}
\usepackage{tikz}
\usepackage{tikz-cd}
\usetikzlibrary{calc,shapes,arrows}
\usepackage{tcolorbox}
\usepackage{listings}
\usepackage[normalem]{ulem}

\usepackage[round]{natbib}
\newcommand{\E}{\mathbb{E}}

\newtheorem{theorem}{Theorem}[section]

\newaliascnt{proposition}{theorem}
\newtheorem{proposition}[proposition]{Proposition}
\aliascntresetthe{proposition}

\newaliascnt{lemma}{theorem}
\newtheorem{lemma}[lemma]{Lemma}
\aliascntresetthe{lemma}

\newaliascnt{corollary}{theorem}
\newtheorem{corollary}[corollary]{Corollary}
\aliascntresetthe{corollary}

\newaliascnt{claim}{theorem}

\aliascntresetthe{claim}

\theoremstyle{definition}

\newaliascnt{definition}{theorem}

\aliascntresetthe{definition}

\newaliascnt{example}{theorem}

\aliascntresetthe{example}

\newaliascnt{assumption}{theorem}
\newtheorem{assumption}[assumption]{Assumption}
\aliascntresetthe{assumption}

\newaliascnt{condition}{theorem}

\aliascntresetthe{condition}

\newaliascnt{question}{theorem}

\aliascntresetthe{question}

\newaliascnt{remark}{theorem}

\aliascntresetthe{remark}

\newaliascnt{remarks}{theorem}

\aliascntresetthe{remarks}

\newaliascnt{aside}{theorem}

\aliascntresetthe{aside}

\newaliascnt{note}{theorem}

\aliascntresetthe{note}

\usepackage{thmtools}
\usepackage{thm-restate}

\usepackage{hyperref}

\hypersetup{
    colorlinks=true,
    linkcolor=OrangeRed,
    filecolor=Thistle,
    urlcolor=Thistle,
    citecolor=Thistle,
}

\usepackage[nameinlink]{cleveref}

\crefname{theorem}{theorem}{theorems}
\Crefname{theorem}{Theorem}{Theorems}
\crefname{proposition}{proposition}{propositions}
\Crefname{proposition}{Proposition}{Propositions}
\crefname{lemma}{lemma}{lemmas}
\Crefname{lemma}{Lemma}{Lemmas}
\crefname{corollary}{corollary}{corollaries}
\Crefname{corollary}{Corollary}{Corollaries}
\crefname{claim}{claim}{claims}
\Crefname{claim}{Claim}{Claims}
\crefname{definition}{definition}{definitions}
\Crefname{definition}{Definition}{Definitions}
\crefname{example}{example}{examples}
\Crefname{example}{Example}{Examples}
\crefname{assumption}{assumption}{assumptions}
\Crefname{assumption}{Assumption}{Assumptions}

\makeatletter
\let\cref@old@isrefconsecutive\cref@isrefconsecutive
\def\cref@isrefconsecutive#1#2{%
  \begingroup
    \def\cref@assumptiontype{assumption}%
    \cref@gettype{#1}{\cref@typea}%
    \ifx\cref@typea\cref@assumptiontype
      \endgroup
      \@cref@refconsecutivefalse
    \else
      \endgroup
      \cref@old@isrefconsecutive{#1}{#2}%
    \fi
}
\makeatother

\crefname{condition}{condition}{conditions}
\Crefname{condition}{Condition}{Conditions}
\crefname{question}{question}{questions}
\Crefname{question}{Question}{Questions}
\crefname{remark}{remark}{remarks}
\Crefname{remark}{Remark}{Remarks}
\crefname{remarks}{remarks}{remarks}
\Crefname{remarks}{Remarks}{Remarks}
\crefname{aside}{aside}{asides}
\Crefname{aside}{Aside}{Asides}
\crefname{note}{note}{notes}
\Crefname{note}{Note}{Notes}
\crefname{appendix}{appendix}{appendices}
\Crefname{appendix}{Appendix}{Appendices}

\newcommand{\secref}[1]{\hyperref[#1]{\S\ref*{#1}}}

\definecolor{backcolour}{rgb}{0.63,0.79,0.95}
\lstdefinestyle{mystyle}{
  backgroundcolor=\color{backcolour},
  basicstyle=\ttfamily\footnotesize,
  breakatwhitespace=false,
  breaklines=true,
  captionpos=b,
  keepspaces=true,
  numbers=left,
  numbersep=5pt,
  showspaces=false,
  showstringspaces=false,
  showtabs=false,
  tabsize=2
}
\begin{document}
\title{Labels}
\author{Mark Whitmeyer\thanks{Arizona State University. Email: \href{mailto:mark.whitmeyer@gmail.com}{mark.whitmeyer@gmail.com}. For KS. I used ChatGPT as one would an RA and \href{https://www.refine.ink}{refine.ink} for feedback.}}
\date{\today}
\maketitle

\begin{abstract}
Labels--grades, credentials, scores, ratings, ranks--do two things. They inform receivers, and they give agents something to chase. I study optimal classification when labels must be earned through costly self-selection. I show that exact certification is inefficiently fine: pooling a small bottom interval saves first-order signaling costs while losing only higher-order decision value. I provide sufficient conditions for lower censorship to maximize efficiency as well as for every optimal classification to use finitely many categories.
\end{abstract}

\section{Introduction}\label{sec:intro}

Societies make distinctions. Students receive grades. Workers acquire credentials. Borrowers receive credit scores. Restaurants receive stars. Platforms assign badges, follower counts, and verification marks. Journals, schools, firms, and people are ranked. These accolades are informative, but they are also prizes. Once a distinction matters, agents expend resources to be on the better side of it.

I study the design of such labels. The institution first commits to a deterministic classification rule: which differences will be publicly certified, disclosed, or rewarded. Agents then choose costly actions to obtain the induced labels. A receiver observes only the public label.

The usual vocabulary for these institutions is informational: a grade tells us how much a student learned, a credit score tells us how likely a borrower is to repay, a star rating tells us something about quality. That language misses half of the object--a public label also gives agents something to \textit{win}. It changes what the receiver believes, but it also changes what the sender wants to prove. An institution that adds one more certified category is not only adding one more message to the receiver's information set but one more rung to the ladder that agents climb.

This paper asks which rungs are worth building. A sender has a private type, which is its quantile \(q \in [0,1]\), and type \(q\) has quality \(x(q)\), with \(x\) increasing. The receiver has a reduced-form payoff of \(r\) in her posterior mean of the sender's quality, for which I only assume smoothness. The sender's preferences are monotone: higher posterior means are more valuable, so if the institution certifies that one group is better than another, it must make the better certificate costly enough that lower types do not imitate it. The design problem is to decide which distinctions are worth sustaining, not how much information to reveal in the abstract.

A designer chooses a deterministic certification technology, which maps actions into public certificates. I argue that in equilibrium, a certificate is payoff-relevant through two numbers: the posterior mean \(M(q)\) attached to the certificate obtained by type \(q\), and the least action \(z(q)\) needed to obtain that certificate. Bayes-plausibility (that beliefs are a martingale) plus incentive compatibility allow me to reduce the problem of choosing an optimal classification to a choice of a monotone partition, which I term an \textit{admissible} signal.

With this in hand, I next derive a simple accounting formula. For any admissible signal \(M\), I reveal in \Cref{prop:general-cost} that the least implementation cost of \(M\) can be naturally written in terms of the ``price'' of a category boundary. This is the mimicry pressure at the boundary multiplied by the private prize jump across it, and this price is the basic object in the paper. A boundary may be very useful to the receiver, or almost useless; it may create a large status prize, or a small one. The formula I derive in the proposition puts those two forces in the same units.

The formula also immediately points to a simple improvement over full certification. Start from the system that certifies every type exactly. Now pool a small interval at the bottom and fully certify everyone above it. The receiver loses little from this change: the types being pooled are close together, so the loss from averaging their qualities is higher order. The signaling saving is much larger. Eliminating the first certified distinction relaxes the action requirements faced by a positive mass of higher types. In \Cref{thm:generic-coarseness}, I make this comparison precise: under smoothness and positive entry stakes, bottom pooling saves first-order signaling cost while losing only higher-order receiver value. Reassuringly, I show in \Cref{prop:genval} that such bottom pooling improves upon full transparency provided the receiver's payoff is continuous in her posterior belief.

This paper is not meant to rediscover the broad fact that coarse or noisy communication can be valuable. That point has already been made in many forms. Coarse certification and grading can change who seeks certification, how much quality is produced, or how much information is acquired;\footnote{See, e.g., \citet{Lizzeri1999}, \citet{AlbanoLizzeri2001}, \citet{BoleslavskyCotton2015}, \citet{Zubrickas2015}, \citet{HarbaughRasmusen2018}, \citet{OstrovskySchwarz2010}, \citet{DeMarzoKremerSkrzypacz2019}, \citet{BizzottoVigier2021}, \citet{Zapechelnyuk2020}, and \citet{AsseyerWeksler2024}.} signals are themselves be sold, priced, or packaged by intermediaries;\footnote{As in \citet{Rayo2013}, \citet{Lu2025}, and \citet{CamboniNiuPaiVohra2025}. Also related is signaling with commitment \citep{BoleslavskyShadmehr2025}, in which a sender commits \textit{ex ante} to her costly signaling.} noisy or mediated messages can relax communication constraints;\footnote{As studied by \citet{BlumeBoardKawamura2007}, \citet{GoltsmanHornerPavlovSquintani2009}, \citet{Rick2013}, \citet{Whitmeyer2019}, \citet{PerezRichetSkreta2022}, and \citet{Salamanca2021}.} scores, tests, and data-use rules can be designed to blunt manipulation, falsification, or distorted incentives;\footnote{As in \citet{Ball2025}, \citet{FrankelKartik2019,FrankelKartik2022}, and \citet{BoleslavskyKim2021}.} and rank or status disclosures can change contest effort.\footnote{\citet{MoldovanuSelaShi2007}, \citet{OlszewskiSiegel2016}, \citet{OlszewskiSiegel2020}, \citet{KrishnaLychaginOlszewskiSiegelTergiman2026}, and \citet{Goel2025} are notable examples of this line of study.} Nor is my aim to show that signaling is wasteful.\footnote{Nor to quantify this waste, as in \citet{FrankelKartik2026}. My question is complementary: which public distinctions should survive when the institution can choose the classification rule?} Instead, I show that a robust improvement upon full certification is to erase, or blur, the bottom rung.

This result also lends \Cref{thm:generic-coarseness} a costly-certification interpretation. After pooling \([0,\varepsilon]\), the institution can implement the outcome as a free default label plus a certified track. Types below \(\varepsilon\) take the default. To receive any label above the default, a sender must first take an entry action that makes the cutoff type indifferent between the default label and the first certified label.\footnote{This resembles disclosure models, in which disclosure, testing, or  certification is (exogenously) costly.\citep{Verrecchia1983,DeMarzoKremerSkrzypacz2019,Zapechelnyuk2020}.} In this sense I endogenize the \citet{Verrecchia1983}-like threshold: the default label and the entry action are consequences of optimal classification.

The next question is whether this bottom-rung repair is actually optimal (and not just an improvement on full transparency). \Cref{prop:lower-censorship} provides conditions under which it, indeed, constitutes the whole design. A sufficient condition is that mimicry pressure \(L\) is convex in quantile, receiver value \(r\) is convex in posterior mean, and sender stakes \(b\) are weakly concave in certified quality. Since \(L(1)=0\), convexity of \(L\) makes mimicry pressure fall in a disciplined way, and the sender-side term works in the same direction as the receiver-side convexity. Every noninitial pooled interval can then be opened up without lowering welfare. As a result, the only pooling that may be optimal is the initial one: a free default category followed by full certification above a cutoff, \textit{viz.}, lower censorship.\footnote{\Cref{prop:postanalog} reveals that this finding persists for general (continuous-in-belief) receiver preferences.}

I finish the paper with two additional exercises. First, when does the designer optimally choose only finitely many categories? In \Cref{thm:finite-partitions}, I show that, if sufficiently small binary distinctions are locally wasteful, every optimum has finite (essential) range. Finally, in \Cref{prop:contest}, I reveal that the same basic forces manifest when the sender's private prize is generated by a winner-take-all tournament. There, the identical cost-of-distinctions formula applies, and the familiar bottom-pooling improvement over full certification follows.

\smallskip

\noindent \textbf{Roadmap.} \secref{sec:model} introduces the environment and contains preliminary results. \secref{sec:cost} proves the cost-of-distinctions formula, and \secref{sec:coarse} reveals the superiority of pooling at the bottom. \secref{sec:posterior} extends the results to more general receiver preferences, and \secref{sec:finite} asks when finitely-many certifications is optimal. \secref{sec:tournaments} highlights that the same insights persist in signaling tournaments. Omitted proofs inhabit \Cref{app:omit}.

\section{The Signaling Environment}\label{sec:model}

There are three parties: a sender, a receiver, and a third-party information designer. The sender has a privately observed \textit{type}, indexed by its quantile \(q \in \left[0,1\right]\).\footnote{Accordingly, the population distribution is uniform in quantile space: \(Q\sim U[0,1]\). Equivalently, we could start with a type \(\theta\in[\ell,u]\) distributed according to a strictly increasing cdf \(F\), and set \(q=F(\theta)\). In that case \(Q=F(\theta)\) is uniform on \([0,1]\), and \(x(q)\) is the quality of the \(q\)-quantile type. If quality equals type, then \(x(q)=F^{-1}(q)\).} Type \(q\) has \textit{quality} \(x(q)\), where \(x \colon \left[0,1\right] \to\mathbb R\) is continuous and strictly increasing. The sender chooses an \textit{action} \(a \in \left[0,\infty\right)\). Actions are costly: type \(q\) incurs a cost \(a S(x(q))\) from choosing \(a\), where \(S(\cdot) > 0\) is strictly decreasing.

The receiver is passive: given a posterior mean \(m\) of quality, the receiver obtains value \(r(m)\), and the sender obtains private benefit \(b(m)\). The sender's payoff is additively separable in benefit and cost: a type \(q\) sender who induces posterior mean \(m\) via action \(a\) receives \(b(m) - aS(x(q))\).

For the main scalar results I impose the following regularity.
\begin{assumption}\label{ass:scalar}
The quality function \(x\) is \(C^2\) and strictly increasing on
\([0,1]\), with \(x'(q)>0\) for all \(q\). The functions \(b\), \(r\), and \(S\) are \(C^2\) on neighborhoods of \(x([0,1])\), with \(b'>0\), \(S>0\), and \(S'<0\).
\end{assumption}
The smoothness is stronger than is needed for some results, but it makes things easier.

Before the sender chooses an action, the designer commits to a (deterministic) public certification technology, which maps actions into public certificates. The receiver observes the public certificate, not the action itself. After observing the technology, the sender chooses an action, the public certificate is generated, the receiver forms a posterior mean, and payoffs are realized.

Formally, let \(Y\) denote a set of public certificates. A \textit{certification technology} is a function \(\gamma\colon [0,\infty)\to Y\). Fixing \(\gamma\), an equilibrium consists of an action rule \(a\colon[0,1]\to[0,\infty)\) and posterior means for the certificates reached under that action rule. If \(y(q)=\gamma(a(q))\), let \(m(y)\) denote the conditional expectation \(\E[x(Q)\mid y(Q)=y]\). I write \(M(q)=m(y(q))\) for the posterior mean of the certificate obtained by type \(q\).

Since actions are totally ordered by cost and the receiver observes only the certificate, no type pays more than the least action needed to obtain its certificate. Thus, each reached certificate is payoff-relevant through two numbers: the posterior mean it produces and the least action needed to obtain it. The names of certificates are irrelevant, so I work with the induced type-space representation. An equilibrium induces a pair \((M,z)\), where \(M(q)\) is the posterior mean obtained by type \(q\), and \(z(q)\) is the least action needed to obtain type \(q\)'s certificate. Type \(q\)'s equilibrium payoff is, consequently, \(b(M(q))-z(q)S(x(q))\).

At equilibrium, the pair \((M,z)\) must satisfy Bayes plausibility (BP)
\[
        M=\E[x\mid M]\qquad\text{a.s.}
        \tag{BP}\label{eq:BP}
\]
and incentive compatibility (IC)
\[
        b(M(q))-z(q)S(x(q))
        \ge
        b(M(t))-z(t)S(x(q))
        \qquad \text{for a.e. }q,t\in[0,1].
        \tag{IC}\label{eq:IC}
\]

\subsection{Preliminaries}

As a first step, I simplify the designer's problem. In particular, \eqref{eq:IC} plus strict monotonicity of \(b\) and \(S\) imply that higher types obtain (weakly) higher posterior means.

\begin{lemma}\label[lemma]{lem:monom}
In any equilibrium, \(M\) is weakly increasing.
\end{lemma}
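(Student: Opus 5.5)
The plan is the standard single-crossing argument: add the two incentive constraints for a pair of types, and use the strict ordering of marginal costs to read off the ordering of the prizes. Fix a full-measure set \(G\subseteq[0,1]\) on which \eqref{eq:IC} holds for all pairs \(q,t\in G\). This is possible because the "a.e." in \eqref{eq:IC} is naturally read as "for \(q\) and \(t\) outside a null set of deviants". If it is read only as holding for almost every pair in \([0,1]^2\), I would first pass by Fubini to a full-measure set of \(q\) for which the inequality holds for a.e.\ \(t\), and then argue on that set. For \(q<t\) in \(G\), I write IC for \(q\) deviating to \(t\), and IC for \(t\) deviating to \(q\):
\[
b(M(q))-b(M(t))\ge \bigl(z(q)-z(t)\bigr)S(x(q)),\qquad
b(M(t))-b(M(q))\ge \bigl(z(t)-z(q)\bigr)S(x(t)).
\]

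Adding these gives \(0\ge (z(q)-z(t))\bigl(S(x(q))-S(x(t))\bigr)\). Since \(x\) is strictly increasing and \(S\) is strictly decreasing, \(S(x(q))>S(x(t))\). Hence \(z(q)\le z(t)\), so least actions are monotone on \(G\). Substituting back into the second inequality gives \(b(M(t))-b(M(q))\ge (z(t)-z(q))S(x(t))\ge 0\), because \(S>0\). Strict monotonicity of \(b\) (\(b'>0\) under \Cref{ass:scalar}, or simply strict monotonicity) then yields \(M(t)\ge M(q)\). So \(M\) is weakly increasing on \(G\).

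The remaining step, and the only real subtlety, is passing from "increasing on a full-measure set" to "weakly increasing". Since \(M\) is defined through \eqref{eq:BP} only almost surely, the natural convention is that \(M\) is identified up to null sets. In that case I would state the lemma as saying that \(M\) admits a weakly increasing version. Concretely, I would redefine \(M\) off \(G\) by \(\tilde M(q)=\sup\{M(t):t\in G,\,t\le q\}\), with the infimum over \(G\) used at points below \(\inf G\). This modification changes \(M\) only on a null set, so it preserves \eqref{eq:BP}, and it gives a monotone function. A modified type pays the least action of some certificate it can reach, so it remains consistent with the equilibrium. I expect this measure-theoretic bookkeeping, rather than the single-crossing computation, to be the main (if minor) obstacle.
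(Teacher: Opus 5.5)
Your argument is correct and is essentially the paper's: the paper sandwiches \(b(M_H)-b(M_L)\) between \((z_H-z_L)S(x(q_H))\) and \((z_H-z_L)S(x(q_L))\), which is exactly your sum of the two incentive constraints, and then reads off \(z_H\ge z_L\) and hence \(M_H\ge M_L\) from \(S>0\) and strict monotonicity of \(b\). Your null-set bookkeeping is more explicit than the paper's, which simply invokes its convention of taking the increasing version. One small note: under the weaker ``almost every pair'' reading of \eqref{eq:IC}, two points of your set \(G\) need not themselves form a regular pair, so you would compare them through an intermediate \(t'\in(q,t)\) that is regular with both, which is routine.
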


As a result, I describe certification outcomes by their induced posterior-mean signals. The relevant \textit{admissible} signals are those that are nondecreasing and Bayes-plausible:
\[
        \mathcal M
        =
        \{M\colon[0,1]\to[x(0),x(1)]:
        M\text{ is nondecreasing and }M=\E[x\mid M]\text{ a.s.}\}.
\]
Full certification is \(M(q)=x(q)\). Full pooling is \(M(q)\equiv\int_0^1x(t)dt\). A \textit{finite classification} is an admissible signal \(M\in\mathcal M\) with finite essential range.\footnote{\textit{Viz.}, there exist finitely many numbers \(m_1,\ldots,m_N\) such that \(M(q)\in\{m_1,\ldots,m_N\}\) for a.e. \(q\).} If \(M\) is finite-valued, monotonicity implies that its cells are intervals up to null sets, and Bayes-plausibility implies that each value of \(M\) is the mean quality in the corresponding cell. For \(M\in\mathcal M\), set \(B(q)=b(M(q))\).\footnote{\label{fn:rep} I identify functions of \(q\) that agree a.e. When such a function has a monotone version, I use its right-continuous monotone version. In particular, because \(M\) is monotone and \(b\) is increasing, I represent \(B=b\circ M\) this way. Thus, when I write \(dB\), I mean the Stieltjes measure generated by this representative: for \(0\le a<b\le1\), \(dB((a,b])=B(b)-B(a)\). In particular, a jump of \(B\)
at \(q>0\) contributes mass \(B(q)-B(q-)\) at \(q\). The same convention applies below when \eqref{eq:IC} implies that an action requirement \(z\) is monotone a.e.--changing \(z\) on a null set does not change the expected implementation cost.} Since \(M\) is nondecreasing and \(b\) is increasing, \(B\) is nondecreasing.

\begin{proposition}\label[proposition]{prop:technology-representation}
Every equilibrium of a deterministic certification technology induces a pair \((M,z)\) such that \(M\in\mathcal M\) and \((M,z)\) satisfies \eqref{eq:IC}. Conversely, for any \(M\in\mathcal M\), there exists a deterministic certification technology \(\gamma\colon\left[0,\infty\right)\to M(\left[0,1\right])\), such that \(a(q)=z^*(q)\coloneqq\int_{\left(0,q\right]}\frac{dB(u)}{S(x(u))}\) is an equilibrium, the induced posterior-mean signal is \(M\) a.s., and the least action needed to obtain type \(q\)'s certificate is \(z^*(q)\) a.s.
\end{proposition}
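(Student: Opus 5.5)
The forward direction is mostly bookkeeping. Fix an equilibrium of some \(\gamma\) and let \((M,z)\) be the induced pair, defined as in the text. Bayes-plausibility holds by construction: \(M(q)=m(y(q))\) is the conditional mean of \(x\) given the certificate, and since \(M\) is a function of \(y\), we have \(M=\E[x\mid y]\). Taking the conditional expectation of this identity given \(M\) and using the tower property yields \(M=\E[x\mid M]\) a.s. Monotonicity of \(M\) is \Cref{lem:monom}. For \eqref{eq:IC}, I would argue as follows. Type \(q\) could deviate to the least action \(z(t)\) that obtains \(t\)'s certificate, and so earn \(b(M(t))-z(t)S(x(q))\). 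Optimality of \(a(q)\) gives \(b(M(q))-a(q)S(x(q))\) at least this. Optimality also forces \(a(q)=z(q)\), since any larger action is strictly dominated. One technical point needs care: the ``least action'' must be attained. If the set of actions yielding a certificate has an infimum that is not attained, I would use the infimum and approximate the deviation, so that the inequality holds in the limit.

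For the converse, fix \(M\in\mathcal M\) and set \(B=b\circ M\), taking the right-continuous version per \Cref{fn:rep}, and let \(z^*(q)=\int_{(0,q]}dB(u)/S(x(u))\). Then \(z^*\) is nondecreasing, and it is constant on any interval where \(M\) is constant. Hence \(z^*\) is a function of the certificate, up to the endpoints of pooling intervals, which is where the a.e. qualifications come in. Define \(\gamma\) by \(\gamma(a)=\sup\{M(q): z^*(q)\le a\}\), the best certificate affordable with action \(a\), with \(\gamma(a)=M(0)\) for small \(a\). This is nondecreasing, takes values in \(M([0,1])\) (closure issues are handled through right-continuity), and the least action yielding certificate \(M(q)\) is \(z^*(q)\) for a.e. \(q\). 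Beliefs are consistent because each reached certificate \(m\) is obtained exactly by the cell \(\{M=m\}\), whose mean is \(m\) by BP.

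The main step is to verify that \(a(q)=z^*(q)\) is optimal, i.e., the IC inequality \(b(M(q))-b(M(t))\ge (z^*(q)-z^*(t))S(x(q))\). Take \(t<q\). The right side equals \(\int_{(t,q]}S(x(q))/S(x(u))\,dB(u)\). Since \(S\circ x\) is decreasing, for \(u\le q\) the integrand is at most \(1\), so the right side is at most \(\int_{(t,q]}dB=B(q)-B(t)\). For \(t>q\), the same integrand over \((q,t]\) is at least \(1\), which gives the reverse comparison, and the inequality follows after flipping signs. Deviations to actions that are not of the form \(z^*(t)\) are dominated, since they cost more for the same certificate. So the only real obstacle is the measure-theoretic care around jumps: atoms of \(dB\) at pooling boundaries, the choice between left and right limits, and making sure boundary types (a null set) do not spoil the statement. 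All of this is why the claims are stated a.s.
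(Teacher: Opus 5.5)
Your forward direction and your verification of \eqref{eq:IC} for \((M,z^*)\) are the same as the paper's: the tower property gives Bayes plausibility, and the comparison uses \(S(x(q))\le S(x(u))\) for \(u\le q\) and the reverse inequality for \(u>q\). Your worry about attainment is moot. In equilibrium \(a(q)\) itself attains the infimum, so \(z(q)=a(q)\).

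\textbf{The gap is in your technology.} The rule \(\gamma(a)=\sup\{M(q):z^*(q)\le a\}\) need not take values in \(M([0,1])\). Take the admissible signal \(M=x\) on \([0,\tfrac12)\) and \(M=m(\tfrac12,1)\) on \([\tfrac12,1]\). Then \(z^*\) is strictly increasing on \([0,\tfrac12)\) and jumps at \(\tfrac12\). For every off-schedule action \(a\in[z^*(\tfrac12-),z^*(\tfrac12))\) you get \(\{q:z^*(q)\le a\}=[0,\tfrac12)\), hence \(\gamma(a)=x(\tfrac12)\). This left limit is not in \(M([0,1])\), and right-continuity of \(M\) does nothing to change that.

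That certificate is reached by no type, so its posterior is not pinned down. Your domination argument (``same certificate at higher cost'') also has no on-schedule action to compare it with.

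Nor can a nondecreasing rule that labels certificates by their means be patched inside \(M([0,1])\). Monotonicity forces \(\gamma(a)\ge x(t)\) for all \(t<\tfrac12\), hence \(\gamma(a)=m(\tfrac12,1)\) on that gap. Types in \([\tfrac12,1]\) could then obtain their certificate at a cost below \(z^*(\tfrac12)\). That breaks both the least-action claim and the optimality of \(a(q)=z^*(q)\) on a set of positive measure.

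\textbf{How the paper avoids this.} It uses a non-monotone technology: \(\gamma(\alpha)=M(t)\) if \(\alpha=z^*(t)\), and \(\gamma(\alpha)=M(0)\) otherwise.
\begin{itemize}
\item This is well defined because \(z^*(q)=z^*(t)\) if and only if \(M(q)=M(t)\), for every pair \(q,t\). You state only one direction. The other follows because \(\int_{(q,t]}dB(u)/S(x(u))=0\) together with \(1/S>0\) forces \(B(t)=B(q)\), and \(b\) is strictly increasing.
\item That missing direction is also what you need for \(\gamma(z^*(q))=M(q)\) and for the least-action claim.
\item With right-continuous representatives the equivalence has no endpoint exceptions, so your ``up to endpoints of pooling intervals'' caveat is unnecessary.
\item Off-schedule deviations are handled by \(b(M(0))-\alpha S(x(q))\le B(0)\le B(q)-z^*(q)S(x(q))\), using \(z^*(q)S(x(q))\le B(q)-B(0)\).
\end{itemize}

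Alternatively, you could keep your monotone rule by enlarging \(Y\) to the closure of the range and assigning belief \(y\) to certificate \(y\). You would then bound a gap deviation by the limit of \eqref{eq:IC} as \(t\uparrow q_0\). That works, but it does not deliver the stated codomain \(M([0,1])\).
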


Thus, moving from deterministic certification technologies to posterior-mean signals is without loss of generality for the design problem. The designer's problem can be written first over equilibrium pairs \((M,z)\). Before minimizing over action costs, the objective is
\[
        \sup_{(M,z)}
        \left\{
        \int_0^1 r(M(q))dq
        -
        \int_0^1 z(q)S(x(q))dq
        \right\},
\]
subject to \(M\in\mathcal M\), \(z\ge0\), and \eqref{eq:IC}. The next step is to solve out the least action cost needed to implement a given posterior-mean signal \(M\). Thanks to \Cref{prop:technology-representation}, once this least cost is solved out, every admissible \(M\) is induced by a deterministic certification technology.

\section{The Cost of Distinctions}
\label{sec:cost}

Define the mimicry pressure at quantile \(q\):\[L(q)\coloneqq \frac{\int_q^1S(x(t))dt}{S(x(q))}.\]
An increase in the public prize above \(q\) must be deterred for lower types and paid by higher types. Let \(K(M)\) be the least expected action cost \(\int_0^1 z(q)S(x(q))dq\) over nonnegative measurable \(z\)s satisfying \eqref{eq:IC} for the specified \(M\). Then,
\begin{proposition}
\label[proposition]{prop:general-cost}
The least expected action cost relates to the mimicry pressure via \(K(M) = \int_{(0,1]}L(q)d[b(M(q))]\).
\end{proposition}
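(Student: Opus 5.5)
The argument has two parts. First, every feasible \(z\) costs at least \(\int_0^1 z^*(q)S(x(q))\,dq\), where \(z^*(q)=\int_{(0,q]}dB(u)/S(x(u))\) is the schedule from \Cref{prop:technology-representation}. Second, a Fubini computation turns that integral into \(\int_{(0,1]}L\,dB\). Since \(z^*\) is nonnegative and satisfies \eqref{eq:IC} by \Cref{prop:technology-representation}, the lower bound is attained and \(K(M)\) equals the formula.

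For the lower bound, I would first use \eqref{eq:IC} to show that any feasible \(z\) is nondecreasing a.e. Adding the two IC inequalities for types \(q>t\) gives \((z(q)-z(t))(S(x(t))-S(x(q)))\ge B(q)-B(t)\ge 0\). Because \(S\circ x\) is strictly decreasing, this forces \(z(q)\ge z(t)\). Now take a \(q\) with \(z(q)<z^*(q)\). The aim is to show that some type \(t\le q\) prefers mimicking \(q\), contradicting IC.

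To do this I would discretize. For a partition \(0=t_0<t_1<\dots<t_n=q\) of good points, the downward IC for type \(t_{k-1}\) not mimicking \(t_k\) gives
\[
z(t_k)-z(t_{k-1})\ \ge\ \frac{B(t_k)-B(t_{k-1})}{S(x(t_{k-1}))}.
\]
Summing over \(k\) and using \(z(t_0)\ge 0\) yields a left-endpoint Riemann--Stieltjes sum. As the mesh shrinks, that sum converges to \(\int_{(0,q]}dB/S(x)\), because \(1/S\circ x\) is continuous and \(B\) is monotone and bounded. Hence \(z\ge z^*\) a.e.

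This limiting step is the main technical obstacle. The partition points must avoid the null set where IC fails, and \(B\) has atoms that must be attributed correctly to the \((0,q]\) convention. I would handle both by choosing the partition points from a full-measure set. At each jump of \(B\) I would place partition points just below and at the jump, and use right-continuity of the representative in footnote~\ref{fn:rep}. Since \(1/S(x(\cdot))\) is continuous, the left-endpoint evaluation at a jump converges to the correct value.

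It then remains to show
\[
\int_0^1 z^*(q)S(x(q))\,dq=\int_0^1\int_{(0,1]}\mathbf 1\{u\le q\}\frac{dB(u)}{S(x(u))}\,S(x(q))\,dq .
\]
Every term is nonnegative, so Tonelli lets me swap the order of integration. The right side becomes \(\int_{(0,1]}\frac{\int_u^1 S(x(q))\,dq}{S(x(u))}\,dB(u)=\int_{(0,1]}L(u)\,dB(u)\), which is the claimed expression for \(K(M)\).
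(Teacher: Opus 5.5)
Your argument is correct and is essentially the paper's: chain the downward IC constraints along fine partitions of regular points, pass to the mesh limit, and use Fubini to convert the cost of \(z^*\) into \(\int_{(0,1]}L\,dB\). You finish with the pointwise bound \(z\ge z^*\) a.e., whereas the paper derives the measure inequality \(dz\ge dB/S(x)\) and integrates it against \(\int_u^1S(x(t))\,dt\); yours is a harmless streamlining that does not even need monotonicity of \(z\).

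Two small remarks. You need not place partition points at the jumps of \(B\), since Riemann--Stieltjes sums with the continuous integrand \(1/S(x(\cdot))\) converge for any partitions with vanishing mesh. Also, adding the two IC inequalities for \(q>t\) gives \((z(q)-z(t))\bigl(S(x(t))-S(x(q))\bigr)\ge0\), not \(\ge B(q)-B(t)\). The conclusion \(z(q)\ge z(t)\) still holds, and it follows more directly from the single constraint \((z(q)-z(t))S(x(t))\ge B(q)-B(t)\ge0\).
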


I term the least action needed to obtain the certificate received by type \(z(q)\) an \textit{action requirement}. After substituting the minimum implementation cost, the designer's problem becomes
\[\sup_{M\in\mathcal M} \int_0^1r(M(q))dq - \int_{(0,1]}L(q) d[b(M(q))],\]
which nests both full certification and finite classifications, which I briefly discuss shortly after proving the existence of an optimum:

\begin{proposition}\label[proposition]{prop:scalar-existence}
The designer's problem has a solution.
\end{proposition}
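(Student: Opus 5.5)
The plan is the direct method: put a compact topology on \(\mathcal M\), show the objective \(W(M)=\int_0^1 r(M)\,dq-\int_{(0,1]}L\,d[b(M)]\) is upper semicontinuous there, and conclude by Weierstrass. For the topology I take pointwise convergence at all continuity points, equivalently \(L^1\) convergence, of nondecreasing functions. Every \(M\in\mathcal M\) takes values in \([x(0),x(1)]\), so Helly's selection theorem gives compactness: any sequence \(M_n\) has a subsequence converging at every \(q\in[0,1]\) (after right-continuous normalization) to a nondecreasing limit \(M\).

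First I check that \(\mathcal M\) is closed under this convergence. Monotonicity and the range constraint pass to the limit trivially. For Bayes plausibility I use a characterization of \(M=\E[x\mid M]\) for nondecreasing \(M\) that survives pointwise limits. On each maximal interval where \(M\) is constant, the value equals the average of \(x\) over that interval. Off such intervals, \(M=x\). A cleaner closed form, which I would try first, is a majorization-type characterization: \(M\in\mathcal M\) iff \(M\) is nondecreasing, \(\int_0^1 M=\int_0^1 x\), and \(\int_0^q M\ge \int_0^q x\) for all \(q\), with equality at every \(q\) where \(M\) strictly increases, that is, at points of increase of \(M\). This is the quantile-space analogue of the standard monotone-partition result for the Kolmogorov/Lorenz curve. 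The inequalities pass to the limit by dominated convergence. The complementarity condition is the delicate part. I would handle it by showing that \(\int_0^q (M-x)\,dM(q)=0\) is preserved, using weak convergence of \(dM_n\) to \(dM\) and uniform convergence of the continuous functions \(\int_0^q (M_n-x)\) (they are equi-Lipschitz).

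Next comes upper semicontinuity of \(W\). The receiver term \(\int r(M_n)\to\int r(M)\) by bounded convergence, since \(r\) is continuous on a compact range. For the cost term I integrate by parts. Since \(L(1)=0\) and \(L\) is \(C^1\) under \Cref{ass:scalar},
\[
\int_{(0,1]}L\,dB=-L(0)B(0)-\int_0^1 B(q)L'(q)\,dq .
\]
Here I must be careful with the atom at \(0\). The measure is over \((0,1]\), so the boundary term is \(-L(0)B(0^+)\) with the right-continuous version. \(B_n=b\circ M_n\) converges a.e. and is bounded, so the integral term converges. The remaining issue is that \(B_n(0)\) need not converge to \(B(0)\): mass can escape to \(0\). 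Since \(L(0)>0\), the cost is \(-L(0)B(0)\) plus a convergent term, and \(\limsup B_n(0)\le\) the appropriate limit value is the wrong direction in general. I would therefore pick the Helly subsequence so that \(B_n(0)\) converges as well, and note that the liminf of the cost is at least the cost of the limit. Monotonicity gives \(\lim B_n(0)\le B(0^+)\), and the coefficient \(-L(0)<0\) then makes the cost lower semicontinuous, so \(W\) is upper semicontinuous.

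Finally, take a maximizing sequence, extract a convergent subsequence, and its limit \(M^*\in\mathcal M\) attains the supremum. The supremum is finite because \(W\) is bounded above by \(\max r\). I expect the main obstacle to be the closedness of Bayes plausibility under limits, specifically the complementarity condition at points of increase. The boundary behaviour at \(q=0\) in the cost term is a secondary subtlety.
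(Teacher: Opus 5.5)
Your proposal is correct and shares the paper's skeleton: Helly selection on a maximizing sequence, closedness of \(\mathcal M\), and upper semicontinuity after integrating the cost by parts, so that welfare becomes \(\int_0^1 r(M)\,dq+L(0)b(M(0^+))+\int_0^1 b(M)L'\,dq\).

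The real difference is how you show that Bayes plausibility survives the limit. The paper uses the orthogonality form \(\int_0^1(x-M_n)\varphi(M_n)\,dq=0\) for all continuous \(\varphi\). This passes to the limit by dominated convergence, since \(M_n\to M\) a.e. The paper then gets \(\E[x-M\mid M]=0\) from the vanishing of the induced signed measure, with no structural description of \(M\) needed.

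Your Lorenz-curve characterization is valid. Writing \(D(q)=\int_0^q(M-x)\), it requires \(D\ge0\), \(D(1)=0\), and \(D=0\) on the support of \(dM\). But it costs more, for three reasons:
\begin{enumerate}
\item You must prove both directions of the characterization; you only assert it as standard.
\item The converse direction needs that \(D=0\) on the support \(\Sigma\) of \(dM\) forces \(M=x\) a.e.\ on \(\Sigma\). This takes a density-point argument (\(D'=M-x\) a.e., and \(D'=0\) at density points of \(\Sigma\)). It is needed whenever \(M\) is strictly increasing on a set of positive measure.
\item \(dM_n\) need not converge weakly to \(dM\) itself. The weak limit can carry an extra atom at \(q=0\) of size \(M(0^+)-\lim_n M_n(0)\), which is harmless only because \(D(0)=0\).
\end{enumerate}
What your route buys is a geometric description of the cells, which could be reused to characterize optima. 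For existence alone, the paper's test-function argument is shorter.

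On the endpoint term, your monotonicity argument (\(B_n(0)\le B_n(q)\to B(q)\) at continuity points \(q\), then \(q\downarrow0\)) is more direct than the paper's essential-infimum/\(L^1\) contradiction. It also gives exactly the direction that upper semicontinuity needs, since welfare is increasing in \(B(0)\), contrary to your aside suggesting otherwise. Likewise, your bound \(W\le\max r\) from \(K\ge0\) is simpler than the paper's term-by-term bound.
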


\medskip

\noindent \textbf{Full Certification.} If \(M(q)=x(q)\), then
\[
        K(M)
        =
        \int_{(0,1]}L(q) d[b(x(q))],
\]
or, when \(b\) and \(x\) are differentiable,
\[
        K(M)
        =
        \int_0^1L(q)b'(x(q))x'(q)dq.
\]

\medskip

\noindent \textbf{Finite Classifications.} If \(M\) is finite-valued with boundary quantiles \(0=q_0<q_1<\cdots<q_N=1\), cell means
\[m_j=\frac{1}{q_j-q_{j-1}}
        \int_{q_{j-1}}^{q_j}x(s)ds,
        \qquad \text{for } j=1,\ldots,N,
\]
and writing \(B_j=b(m_j)\), \(B\) jumps at \(q_j\) by
\(B_{j+1}-B_j\). Hence, the minimum-cost formula becomes
\[
        K(M)
        =
        \sum_{j=1}^{N-1}
        L(q_j)\left[b(m_{j+1})-b(m_j)\right].
\]

The cost of a public distinction is the sender-benefit jump across the distinction multiplied by the mimicry pressure at the boundary.

\section{An Easy Improvement}\label{sec:coarse}

As we now discover, smoothness renders full certification suboptimal. The reason is an order comparison: smooth receiver value gives tiny distinctions only higher-order value, while the first layer of the certification ladder has first-order cost.

Let \(M^F(q)=x(q)\) denote full certification. For \(\varepsilon>0\), let \(M^\varepsilon\) pool \(\left[0,\varepsilon\right]\) and fully certify all types above \(\varepsilon\). Write
\[
        m_\varepsilon\coloneqq \frac{1}{\varepsilon}\int_0^\varepsilon x(q)dq.
\]
Thus, \(M^\varepsilon(q)=m_\varepsilon\) on \(\left[0,\varepsilon\right]\), and \(M^\varepsilon(q)=x(q)\) on \(\left(\varepsilon,1\right]\). Let \(R^F\) and \(K^F\) denote the receiver value and implementation cost under full certification, and let \(R^\varepsilon\) and \(K^\varepsilon\) denote their counterparts under \(M^\varepsilon\).

\begin{theorem}\label{thm:generic-coarseness}
If \(L(0)b'(x(0))x'(0)>0\),\footnote{Observe that this follows from \Cref{ass:scalar}.} then there exists \(\bar\varepsilon>0\) such that, for every \(\varepsilon\in\left(0,\bar\varepsilon\right)\),
\(R^\varepsilon-K^\varepsilon>R^F-K^F\), i.e., full certification is not welfare optimal.
\end{theorem}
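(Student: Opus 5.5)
The plan is to compare the receiver and cost terms separately, using the cost formula of \Cref{prop:general-cost}. I will show that the receiver loss \(R^F-R^\varepsilon\) is \(O(\varepsilon^3)\), while the cost saving \(K^F-K^\varepsilon\) is at least of order \(\varepsilon\) with a strictly positive coefficient.

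\textbf{Receiver side.} The two signals agree on \((\varepsilon,1]\), so
\[
R^F-R^\varepsilon=\int_0^\varepsilon\bigl[r(x(q))-r(m_\varepsilon)\bigr]dq .
\]
I will Taylor expand \(r\) around \(m_\varepsilon\). The first-order term is \(r'(m_\varepsilon)\int_0^\varepsilon (x(q)-m_\varepsilon)dq\), which vanishes by the definition of \(m_\varepsilon\). The remainder is bounded by \(\tfrac12\sup|r''|\int_0^\varepsilon (x(q)-m_\varepsilon)^2dq\). Since \(|x(q)-m_\varepsilon|\le \varepsilon\sup|x'|\) on \([0,\varepsilon]\), this gives \(|R^F-R^\varepsilon|\le C\varepsilon^3\), with \(C\) depending only on \(\sup|r''|\) and \(\sup|x'|\) over the compact range (these are finite by \Cref{ass:scalar}).

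\textbf{Cost side.} By \Cref{prop:general-cost}, \(K^F=\int_0^1 L(q)\,b'(x(q))x'(q)\,dq\). Under \(M^\varepsilon\), the function \(B\) is constant on \([0,\varepsilon]\), jumps at \(\varepsilon\) by \(b(x(\varepsilon))-b(m_\varepsilon)\), and then equals \(b(x(q))\) on \((\varepsilon,1]\). The jump is nonnegative because \(m_\varepsilon\le x(\varepsilon)\). Hence
\[
K^\varepsilon=L(\varepsilon)\bigl[b(x(\varepsilon))-b(m_\varepsilon)\bigr]+\int_\varepsilon^1 L\,b'(x)x'\,dq,
\]
and therefore
\[
K^F-K^\varepsilon=\int_0^\varepsilon L(q)\,b'(x(q))x'(q)\,dq-L(\varepsilon)\bigl[b(x(\varepsilon))-b(m_\varepsilon)\bigr].
\]
Let \(g(q)=L(q)\,b'(x(q))x'(q)\), which is continuous with \(g(0)>0\). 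The first term equals \(\varepsilon g(0)+o(\varepsilon)\). For the second term, \(x(\varepsilon)-m_\varepsilon=\tfrac{\varepsilon}{2}x'(0)+o(\varepsilon)\) by a first-order expansion of \(x\). Continuity of \(L\) and a mean-value step for \(b\) then make the subtracted term \(\tfrac{\varepsilon}{2}g(0)+o(\varepsilon)\). So \(K^F-K^\varepsilon=\tfrac{\varepsilon}{2}g(0)+o(\varepsilon)\).

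\textbf{Conclusion.} Combining the two sides,
\[
(R^\varepsilon-K^\varepsilon)-(R^F-K^F)\ge \tfrac{\varepsilon}{2}g(0)+o(\varepsilon)-C\varepsilon^3,
\]
which is strictly positive for all sufficiently small \(\varepsilon\). This yields the required \(\bar\varepsilon\).

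The main obstacle is the bookkeeping of the jump term. I need the Stieltjes convention of footnote \ref{fn:rep}, under which the mass at \(\varepsilon\) is \(B(\varepsilon)-B(\varepsilon-)\) with the right-continuous representative, so that the cost formula for \(M^\varepsilon\) is exactly the display above. I also need the \(o(\varepsilon)\) terms controlled uniformly. I would handle both with \(C^1\) continuity and compactness: \(L\) is \(C^1\) because \(S\circ x\) is \(C^1\) and positive, \(x\) is \(C^2\), and \(b\) is \(C^2\), so the relevant derivatives are uniformly continuous.
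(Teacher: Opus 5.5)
Your proposal is correct and follows the paper's proof essentially step for step: the same jump-plus-integral decomposition of $K^\varepsilon$ via the cost formula, and the same second-order Taylor bound in which the linear term vanishes, giving an $O(\varepsilon^3)$ receiver loss. The only difference is cosmetic. You use $o(\varepsilon)$ asymptotics, which recover the sharp coefficient $g(0)/2$. The paper instead uses explicit $(1+\eta)$ continuity bounds near $0$ to obtain the cruder but fully quantified estimate $K^F-K^\varepsilon\ge c\varepsilon$.
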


There is a natural costly-certification interpretation for this bottom pooling. The designer offers a free default label. Types who do not enter the track receives a default label, and the receiver assigns it posterior mean \(m_\varepsilon\). To obtain any certificate strictly above the default label, a sender must first choose
the entry action
\[
\xi_\varepsilon
:=
\frac{b(x(\varepsilon))-b(m_\varepsilon)}{S(x(\varepsilon))}.
\]
The least-cost action schedule implementing \(M^\varepsilon\) is, up to the
irrelevant value at the cutoff,
\[
z^\varepsilon(q)=0 \quad \text{for } q<\varepsilon, \quad \text{and} \quad
z^\varepsilon(q)=
\xi_\varepsilon+
\int_\varepsilon^q
\frac{b'(x(u))x'(u)}{S(x(u))}\,du
\quad \text{for } q>\varepsilon .
\]
As a result, the cutoff type is just indifferent between the default label and the first-certified label, with types below \(\varepsilon\) facing a higher effective cost of the same entry action and remaining with the default label and types above \(\varepsilon\) entering the certified track before then being separated by the rest of the action schedule.

This also highlights why the bottom is special--why not pool at the top, for instance? Eliminating the first certified distinctions
relaxes action requirements for a positive mass of higher types. Pooling the top interval \([1-\varepsilon,1]\) generally does not have the same force. Since \(L(1)=0\), mimicry pressure near the top vanishes. Top pooling, therefore, saves only higher-order cost, and so its welfare effect is curvature-dependent.

\subsection{Lower censorship}

In fact, there is a natural sufficient condition under which it is optimal to simply pool a single interval at the bottom, then fully certify the remainder. We say that \textit{lower-censorship is optimal} if there exists \(c^* \in [0,1]\) such that pooling \(\left[0,c^\ast\right]\) and fully certifying all quantiles above \(c^\ast\) is optimal.

For \(c>0\), define
\[
H(c)\coloneqq cr(m(0,c))+L(c)b(m(0,c)),
\]
and set \(H(0)\coloneqq L(0)b(x(0))\).

\begin{proposition}\label[proposition]{prop:lower-censorship}
Posit \Cref{ass:scalar}. If, for every \(q\in\left(0,1\right)\) and every \(y\in x\left(\left[0,1\right]\right)\),
\(L''(q)\ge0\) and \(r''(y)+L'(q)b''(y)\ge0\), then lower-censorship is optimal.
\end{proposition}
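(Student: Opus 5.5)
The plan is to rewrite the designer's objective, via \Cref{prop:general-cost}, as a pointwise integral of a function that is convex in the posterior mean. Opening any pooled interval that does not start at $0$ then weakly raises welfare. Finally, \Cref{prop:scalar-existence} upgrades ``weakly improves'' to ``lower censorship is optimal.''

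\textbf{Step 1: rewrite the objective.} For $M\in\mathcal M$ with right-continuous $B=b\circ M$, integrate $\int_{(0,1]}L\,dB$ by parts. Since $L(1)=0$ and $L$ is $C^1$ under \Cref{ass:scalar},
\[
\int_{(0,1]}L(q)\,dB(q)=-L(0)B(0)-\int_0^1 B(q)L'(q)\,dq .
\]
Here $B(0)$ is the right-continuous value, namely $b$ of the mean of the initial pool, or $b(x(0))$ if there is no pool at the bottom. With $\varphi(q,y):=r(y)+L'(q)b(y)$ the welfare becomes
\[
W(M)=\int_0^1\varphi(q,M(q))\,dq+L(0)\,b(M(0)).
\]
The hypothesis $r''+L'b''\ge0$ says exactly that $\varphi(q,\cdot)$ is convex on $x([0,1])$ for each $q$. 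For a lower-censorship signal with cutoff $c$, this expression equals $H(c)+\int_c^1\varphi(q,x(q))\,dq$, which explains the definition of $H$.

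\textbf{Step 2: opening one noninitial pool.} Suppose $M$ is constant, equal to $m=m(a,c)$, on an interval $(a,c]$ with $a>0$. Replace $M$ by $x$ on $(a,c]$. This keeps $M$ monotone and Bayes-plausible, and it leaves the boundary term $L(0)b(M(0))$ unchanged. By convexity,
\[
\varphi(q,x(q))\ge\varphi(q,m)+\bigl[r'(m)+L'(q)b'(m)\bigr](x(q)-m).
\]
Integrate over $(a,c]$.
\begin{itemize}
\item The $r'(m)$ term vanishes because $\int_a^c(x-m)=0$.
\item In the remaining term, $L'$ is nondecreasing because $L''\ge0$, and $x-m$ is increasing with zero mean on $(a,c]$. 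Chebyshev's integral inequality therefore gives $\int_a^c L'(q)(x(q)-m)\,dq\ge0$.
\item Since $b'(m)>0$, this term is nonnegative.
\end{itemize}
Hence the change in welfare is $\ge0$. This is the only place where both hypotheses are used, and it also shows why the initial pool is special: opening it would change the boundary term $L(0)b(M(0))$, which the convexity argument does not control.

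\textbf{Step 3: from one pool to all pools.} Take an optimal $M^\ast$, which exists by \Cref{prop:scalar-existence}. Its maximal intervals of constancy are at most countably many, since each has positive length. Let $c^\ast$ be the right endpoint of the one containing $0$, or $c^\ast=0$ if there is none. Define $\tilde M=m(0,c^\ast)$ on $[0,c^\ast]$ and $\tilde M=x$ above $c^\ast$.

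The main technical point is that $M^\ast=x$ a.e.\ off the union of its flat intervals. I would argue this from \eqref{eq:BP}:
\begin{itemize}
\item Off the flat intervals, $M^\ast$ is strictly increasing, so its level sets there are null or singletons.
\item The $\sigma$-algebra generated by $M^\ast$, restricted to that set, therefore separates points up to null sets.
\item Combined with monotonicity, this gives $\E[x\mid M^\ast]=x$ there. I would formalize this through the disintegration of Lebesgue measure along $M^\ast$.
\end{itemize}

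Given that fact, $W(\tilde M)-W(M^\ast)$ is a countable sum of the nonnegative interval differences from Step 2. The sum converges absolutely because $\varphi$ is bounded on $[0,1]\times x([0,1])$. Hence $W(\tilde M)\ge W(M^\ast)$, so $\tilde M$ is optimal and lower censorship at $c^\ast$ is optimal. I expect this a.e.\ identification off the flat intervals, together with the boundary bookkeeping at $q=0$ in the integration by parts, to be the main obstacle.
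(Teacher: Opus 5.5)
Your proposal is correct, and Steps 1 and 2 match the paper's argument. The paper proves exactly your cellwise inequality $G(a,c)\le\int_a^c[r(x(q))+L'(q)b(x(q))]\,dq$, using convexity of $y\mapsto r(y)+L'(q)b(y)$ and the same Chebyshev covariance identity for $\int_a^cL'(q)[x(q)-m]\,dq$.

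The routes diverge in how this bound reaches an arbitrary $M\in\mathcal M$. The paper first sums it over the cells of a \emph{finite} classification to get $W(M)\le H(q_1)+\int_{q_1}^1[\cdots]\le H(c^\ast)+\int_{c^\ast}^1[\cdots]$, where $c^\ast$ maximizes this continuous function of the cutoff. It then extends to general signals by approximating with finite range-coarsenings $M_n=\E[M\mid\sigma_n]$ and arguing $W(M_n)\to W(M)$, which needs some care with the endpoint term $L(0)b(\underline M_n)$. You instead use a structural description of admissible signals: countably many flat cells carrying their cell means, and $M=x$ a.e.\ elsewhere. This lets you compare $M^\ast$ with its own lower-censored version in one exact computation, with no limit.

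Your route is cleaner and gives slightly more: every admissible $M$ is weakly dominated by the lower censorship at its own initial cutoff. You could therefore even skip \Cref{prop:scalar-existence} and maximize over the cutoff directly. The price is the structural lemma, which the paper's route never needs.

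Both points you flag close quickly.
\begin{itemize}
\item Let $D$ be the complement of the flat cells. On $D$, $M^\ast$ is injective with a monotone, hence Borel, inverse. So $x\mathbf 1_D$ is $\sigma(M^\ast)$-measurable, and Bayes plausibility gives $M^\ast\mathbf 1_D=\E[x\mid M^\ast]\mathbf 1_D=x\mathbf 1_D$.
\item If there is no initial pool, apply Bayes plausibility to $\{M^\ast\le\underline M+\epsilon\}=[0,t_\epsilon]$ (up to null sets). This gives $m(0,t_\epsilon)\in[\underline M,\underline M+\epsilon]$ with $t_\epsilon\downarrow0$, so $\underline M=x(0)$ and the boundary terms of $M^\ast$ and $\tilde M$ agree.
\end{itemize}
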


A stronger sufficient condition is economically natural:
\begin{corollary}\label[corollary]{cor:convex}
    If \(L''(q) \geq 0\) and \(r''(y) \geq 0 \geq b''(y)\), then lower-censorship is optimal.
\end{corollary}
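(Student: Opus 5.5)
The plan is to deduce \Cref{cor:convex} directly from \Cref{prop:lower-censorship}, so the only task is to check that its hypotheses are met. The first condition, \(L''(q)\ge0\) on \((0,1)\), is assumed outright. For the second, fix \(q\in(0,1)\) and \(y\in x([0,1])\); it suffices to show \(L'(q)\le 0\). Then \(L'(q)b''(y)\ge 0\), because \(b''(y)\le 0\), and adding \(r''(y)\ge0\) gives \(r''(y)+L'(q)b''(y)\ge0\). Since \(b\) and \(r\) are \(C^2\) only on a neighborhood of \(x([0,1])\), I will read the corollary's curvature hypotheses as holding on \(x([0,1])\), which is exactly what \Cref{prop:lower-censorship} needs.

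The substantive step is therefore the sign of \(L'\). Under \Cref{ass:scalar}, \(L\) is differentiable, with
\[
L'(q)=-1-\frac{S'(x(q))x'(q)}{S(x(q))^2}\int_q^1 S(x(t))\,dt .
\]
Because \(S'<0\) and \(x'>0\), the second term is positive, so this formula alone does not fix the sign. I will use convexity instead. For a convex function on \([0,1]\), the derivative \(L'\) is nondecreasing, so for every \(q\in(0,1)\) we have \(L'(q)\le L'(1^-)\). At the top, \(\int_1^1 S\,dt=0\), so the second term vanishes in the limit and \(L'(1^-)=-1\). Hence \(L'(q)\le -1<0\) for all \(q\in(0,1)\).

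The only delicate point is making the endpoint evaluation rigorous. Continuity of \(S\circ x\) and of \(x'\) on \([0,1]\), together with \(S>0\), shows that the formula for \(L'\) extends continuously to \(q=1\) with value \(-1\). Monotonicity of \(L'\) on \((0,1)\), which follows from \(L''\ge0\), then yields the bound. As a by-product, \(L\) is decreasing, which agrees with the mimicry-pressure interpretation in the text. With both hypotheses of \Cref{prop:lower-censorship} verified, lower censorship is optimal.
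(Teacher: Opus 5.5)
Your proof is correct and follows the paper's route: you apply \Cref{prop:lower-censorship} and observe that \(L'\le 0\) makes \(L'(q)b''(y)\ge 0\), so \(r''+L'b''\ge r''\ge 0\). The only difference is how you sign \(L'\): you compute \(L'(1)=-1\) and use that \(L'\) is nondecreasing, whereas the paper (in the proof of \Cref{prop:postanalog}) gets \(L'\le 0\) from \(L(1)=0\), \(L\ge 0\), and convexity; both arguments work, and yours gives the slightly sharper bound \(L'\le -1\).
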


That is, lower-censorship is optimal, provided mimicry pressure is convex in quantile, the receiver decision value is convex in posterior mean (which holds if it is an indirect value function for a decision problem), and sender stakes are weakly concave in the certified quality (diminishing returns to its perceived value). Under these conditions, every noninitial pooled interval is inferior to full certification, so the only possible pooling is the bottom one.

\section{Beyond Posterior Means}\label{sec:posterior}

\Cref{thm:generic-coarseness}'s finding that lower-censorship (pooling certificates at the bottom) improves efficiency does not require the receiver's payoff to be a function of the posterior mean and easily generalizes to a significant degree. It is enough that receiver value be continuous when a posterior collapses to the bottom type (holding the sender-side assumptions fixed).

Specifically, let \(V\) be a bounded Borel payoff on posterior distributions over \(x\left(\left[0,1\right]\right)\). Under full certification, type \(q\) induces the posterior \(\delta_{x(q)}\). Under \(M^\varepsilon\), the bottom cell induces the posterior distribution of \(x(Q)\) conditional on \(Q\in[0,\varepsilon]\); call this posterior \(\mu_\varepsilon\). Its mean is \(m_\varepsilon\). In this subsection, write
\[
R^F\coloneqq\int_0^1V(\delta_{x(q)})dq \quad \text{and} \quad
R^\varepsilon\coloneqq\varepsilon V(\mu_\varepsilon)+\int_\varepsilon^1V(\delta_{x(q)})dq,
\]
and understand continuity of \(V\) to mean weak continuity.

\begin{proposition}\label[proposition]{prop:genval}
Maintain the parts of \Cref{ass:scalar} involving \(x,b,S\). Suppose \(V\) is continuous at \(\delta_{x(0)}\). If \(L(0)b'(x(0))x'(0)>0\), then there exists \(\bar{\varepsilon}>0\) such that, for every \(\varepsilon\in(0,\bar{\varepsilon})\), \(R^\varepsilon-K^\varepsilon>R^F-K^F\), i.e., full certification is not welfare optimal.
\end{proposition}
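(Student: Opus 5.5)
The plan is to compare receiver value and implementation cost separately. I will show that the receiver loss is \(o(\varepsilon)\), while the cost saving is of order \(\varepsilon\) with a strictly positive coefficient. The sender-side computation is identical to the one behind \Cref{thm:generic-coarseness}, since \(K\) depends only on \(M\) through \(b\circ M\) via \Cref{prop:general-cost}; only the receiver side changes.

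\textbf{Receiver side.} The receiver comparison is
\[
R^F-R^\varepsilon=\int_0^\varepsilon V(\delta_{x(q)})dq-\varepsilon V(\mu_\varepsilon)
=\int_0^\varepsilon\bigl[V(\delta_{x(q)})-V(\delta_{x(0)})\bigr]dq+\varepsilon\bigl[V(\delta_{x(0)})-V(\mu_\varepsilon)\bigr].
\]
Since \(x\) is continuous, \(\delta_{x(q)}\to\delta_{x(0)}\) weakly as \(q\to0\). The support of \(\mu_\varepsilon\) lies in \([x(0),x(\varepsilon)]\), so \(\mu_\varepsilon\to\delta_{x(0)}\) weakly as well. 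Continuity of \(V\) at \(\delta_{x(0)}\) then makes both brackets tend to zero, uniformly over \(q\le\varepsilon\) in the first term. Hence \(R^F-R^\varepsilon=o(\varepsilon)\). Boundedness of \(V\) is needed only to make the integrals well defined.

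\textbf{Cost side.} By \Cref{prop:general-cost}, \(B^F=b\circ x\) and \(B^\varepsilon\) coincide on \((\varepsilon,1]\). On \([0,\varepsilon)\), \(B^\varepsilon\) is constant at \(b(m_\varepsilon)\), and it jumps at \(\varepsilon\) by \(b(x(\varepsilon))-b(m_\varepsilon)\). (If the representative has \(B^\varepsilon(0)=b(m_\varepsilon)\) on \([0,\varepsilon]\), the jump sits at \(\varepsilon\) either way.) Therefore
\[
K^F-K^\varepsilon=\int_0^\varepsilon L(q)b'(x(q))x'(q)dq-L(\varepsilon)\bigl[b(x(\varepsilon))-b(m_\varepsilon)\bigr].
\]
Since \(L\), \(b'\circ x\), and \(x'\) are continuous, the first term is \(\varepsilon L(0)b'(x(0))x'(0)+o(\varepsilon)\). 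For the second term, a mean-value argument gives \(b(x(\varepsilon))-b(m_\varepsilon)\le \max_{[x(0),x(\varepsilon)]}b'\cdot(x(\varepsilon)-m_\varepsilon)\). Moreover \(x(\varepsilon)-m_\varepsilon\le x(\varepsilon)-x(0)\), and by the \(C^1\) property of \(x\) this is in fact \(\tfrac{\varepsilon}{2}x'(0)+o(\varepsilon)\). The second term is thus \(\tfrac{\varepsilon}{2}L(0)b'(x(0))x'(0)+o(\varepsilon)\), and so
\[
K^F-K^\varepsilon=\tfrac{\varepsilon}{2}L(0)b'(x(0))x'(0)+o(\varepsilon).
\]
The leading coefficient is strictly positive by hypothesis.

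\textbf{Conclusion and main obstacle.} Combining the two sides,
\[
(R^\varepsilon-K^\varepsilon)-(R^F-K^F)=\tfrac{\varepsilon}{2}L(0)b'(x(0))x'(0)+o(\varepsilon),
\]
which is positive for all sufficiently small \(\varepsilon\). This yields the required \(\bar\varepsilon\). The main obstacle is the precise first-order expansion of the jump term at \(\varepsilon\). I need \(x(\varepsilon)-m_\varepsilon\sim\varepsilon x'(0)/2\), which is strictly smaller than \(\varepsilon x'(0)\). A crude bound would give only a weak inequality at first order, so I will do this expansion carefully with a Taylor expansion of \(x\) at \(0\). Everything else is continuity.
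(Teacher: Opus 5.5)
Your proposal is correct and follows the paper's proof. Because $\mu_\varepsilon$ has mean $m_\varepsilon$, the cost side is exactly the comparison from \Cref{thm:generic-coarseness}, and the receiver loss is $o(\varepsilon)$ since $\mu_\varepsilon$ and $\delta_{x(q)}$ ($q\le\varepsilon$) converge weakly to $\delta_{x(0)}$, where $V$ is continuous. The only difference is cosmetic: you write the cost saving as an asymptotic expansion with the sharp leading coefficient $\tfrac{1}{2}L(0)b'(x(0))x'(0)$, whereas the paper uses explicit $(1+\eta)$-bounds to obtain a cruder constant $c\varepsilon$.
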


A standard receiver decision problem generates such a \(V\):
\begin{corollary}
Suppose that, after observing the certificate, the receiver chooses \(\alpha\in A\), where \(A\) is a compact metric space. Let the receiver have continuous utiliy \(u\colon A\times x\left(\left[0,1\right]\right)\to\mathbb R\). If \(L(0)b'(x(0))x'(0)>0\), then there exists \(\bar{\varepsilon}>0\) such that, for every \(\varepsilon\in(0,\bar{\varepsilon})\), \(R^\varepsilon-K^\varepsilon>R^F-K^F\), i.e., full certification is not welfare optimal.
\end{corollary}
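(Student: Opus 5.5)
The plan is to reduce the corollary to \Cref{prop:genval}. I will define the receiver's indirect value on posteriors by
\[
V(\mu)\coloneqq \max_{\alpha\in A}\int u(\alpha,y)\,\mu(dy),
\]
where \(\mu\) is any Borel probability measure on the compact set \(X\coloneqq x([0,1])\). Once \(V\) is shown to be a bounded Borel function that is weakly continuous at \(\delta_{x(0)}\), the conclusion follows verbatim from \Cref{prop:genval}. The sender-side hypotheses of that proposition, namely the parts of \Cref{ass:scalar} on \(x,b,S\) and \(L(0)b'(x(0))x'(0)>0\), are either inherited directly or are exactly the stated condition. One point to record is that the receiver's equilibrium payoff when her posterior is \(\mu\) is indeed \(V(\mu)\), since she best responds to the certificate. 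So \(R^F\) and \(R^\varepsilon\) as defined in \secref{sec:posterior} are the receiver values under full certification and under \(M^\varepsilon\).

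\textbf{Step 1: the maximum exists and \(V\) is bounded.} Since \(A\times X\) is compact and \(u\) is continuous, \(u\) is bounded and uniformly continuous. The bound \(|V|\le\sup|u|\) gives boundedness. For fixed \(\mu\), the map \(\alpha\mapsto\int u(\alpha,y)\,\mu(dy)\) is continuous, by uniform continuity together with dominated convergence. Hence the maximum over compact \(A\) is attained.

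\textbf{Step 2: \(V\) is weakly continuous, hence Borel.} This is the only step with content, and I will prove continuity everywhere, not just at \(\delta_{x(0)}\). The key claim is that the family \(\{u(\alpha,\cdot)\}_{\alpha\in A}\) is uniformly bounded and equicontinuous on \(X\), which follows from uniform continuity of \(u\). On the compact metric space \(X\), weak convergence \(\mu_n\to\mu\) implies convergence of integrals uniformly over such a family. One route is Arzelà--Ascoli: the family is totally bounded in sup norm, so it is approximated by a finite \(\eta\)-net, and on the net, integrals converge by weak convergence. It then follows that
\[
\sup_{\alpha\in A}\Bigl|\int u(\alpha,\cdot)\,d\mu_n-\int u(\alpha,\cdot)\,d\mu\Bigr|\to0,
\]
and therefore \(|V(\mu_n)-V(\mu)|\to0\), because the difference of two suprema is bounded by the supremum of the differences. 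Since the space of probability measures on \(X\) with the weak topology is metrizable, sequential continuity suffices. Continuity then gives Borel measurability.

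\textbf{Step 3: apply \Cref{prop:genval}.} In particular \(V\) is continuous at \(\delta_{x(0)}\), and \Cref{prop:genval} delivers \(\bar\varepsilon>0\) with \(R^\varepsilon-K^\varepsilon>R^F-K^F\) for all \(\varepsilon\in(0,\bar\varepsilon)\).

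The main obstacle is the uniform-in-\(\alpha\) convergence in Step 2. I would handle it with the finite-net argument, or equivalently with Berge's maximum theorem applied to the jointly continuous map \((\alpha,\mu)\mapsto\int u(\alpha,\cdot)\,d\mu\) on the compact set \(A\times\Delta(X)\). Joint continuity of that map itself follows from the same equicontinuity estimate.
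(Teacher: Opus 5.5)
Your proposal is correct and follows the paper's intended route. The paper states this corollary as an immediate consequence of \Cref{prop:genval}: the indirect value \(V(\mu)=\max_{\alpha\in A}\int u(\alpha,\cdot)\,d\mu\) is bounded and weakly continuous, in particular at \(\delta_{x(0)}\). Your Step 2 spells out the Berge/equicontinuity argument for that continuity, which the paper leaves unstated.
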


\Cref{prop:lower-censorship,cor:convex} also possess posterior-valued analogs. For \(c>0\), define
\[
H_V(c)\coloneqq cV(\mu_{0,c})+L(c)b(m(0,c)),
\]
and set \(H_V(0)\coloneqq L(0)b(x(0))\). Define the welfare of lower-censorship at cutoff \(c\) by
\[
\Phi_V(c)\coloneqq H_V(c)+\int_c^1\left[V(\delta_{x(q)})+L'(q)b(x(q))\right]dq.
\]

\begin{proposition}\label[proposition]{prop:postanalog}
Maintain the parts of \Cref{ass:scalar} involving \(x,b,S\). Suppose \(V\) is bounded, weakly continuous, and convex on posteriors over \(x([0,1])\). If \(L''(q)\ge0\) and \(b''(y)\le0\) for every \(q\in(0,1)\) and every \(y\in x([0,1])\), then lower-censorship is optimal.
\end{proposition}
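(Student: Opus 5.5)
We reduce the designer's problem to a functional of a monotone partition and then show that any pooled interval not starting at $0$ can be split open without lowering welfare. By \Cref{prop:technology-representation,prop:general-cost} and the posterior-valued analog of the receiver term, welfare of an admissible $M$ is
\[
W(M)=\int_0^1 V(\mu_M(q))\,dq-\int_{(0,1]}L(q)\,d[b(M(q))],
\]
where $\mu_M(q)$ is the conditional law of $x(Q)$ given the cell of $q$. Integrating by parts with $L(1)=0$ gives
\[
\int_{(0,1]}L\,dB=-L(0)B(0)-\int_0^1 L'(q)B(q)\,dq .
\]
The $B(0)$ term is awkward when $0$ lies in a pooled cell, which is exactly why $H_V(c)$ bundles $L(c)b(m(0,c))$ with the first cell. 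For lower censorship at $c$, this computation reproduces $\Phi_V(c)$. A solution $M^*$ exists by \Cref{prop:scalar-existence}; its proof should carry over to bounded, weakly continuous $V$, since convergence of monotone signals gives weak convergence of posteriors, but this needs to be checked. The plan is to show $W(M^*)\le\Phi_V(c^*)$, where $[0,c^*]$ is the initial pooled cell of $M^*$ ($c^*=0$ if there is none).

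The key step is a local improvement. Take any pooled interval $(a,c]$ with $a>0$, or with $a=0$ but not the initial cell, and replace it by full certification on $(a,c]$, leaving everything else unchanged. The resulting $M'$ is still monotone and Bayes-plausible, since the neighbouring cell means bracket $m(a,c)$. I compare three pieces.

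\emph{Receiver side.} Convexity of $V$ together with $\mu_{a,c}=\frac{1}{c-a}\int_a^c\delta_{x(q)}\,dq$ gives
\[
(c-a)V(\mu_{a,c})\le\int_a^c V(\delta_{x(q)})\,dq,
\]
via Jensen for convex functions on measures. This uses weak continuity and boundedness to pass from finite mixtures to the integral.

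\emph{Sender side.} Inside $(a,c]$, the change in the interior term of $-\int L'B$ is $-\int_a^c L'(q)\bigl[b(x(q))-b(m)\bigr]dq$, where $m=m(a,c)$. The change in the boundary jumps is
\[
L(a)\bigl[b(x(a))-b(m)\bigr]+L(c)\bigl[b(m)-b(x(c))\bigr].
\]
It is cleaner to write the whole sender contribution on $[a,c]$ directly as
\[
L(a)\bigl[b(m)-B(a-)\bigr]+L(c)\bigl[B(c+)-b(m)\bigr]\quad\text{versus}\quad \int_{(a,c]}L\,d[b\circ x]\ \text{plus the endpoint jumps}.
\]

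Then I need to show $\int_a^c L\,d(b\circ x)$ plus endpoint terms is at most the pooled cost. Equivalently, writing $\beta(q)=b(x(q))$,
\[
(L(a)-L(c))\,b(m)\ \ge\ L(a)\beta(a)-L(c)\beta(c)-\int_a^c L'\beta\,dq=\int_a^c L\,\beta'\,dq .
\]
Set $\lambda(q)=-L'(q)\ge 0$. Since $L''\ge0$, $\lambda$ is nonincreasing, so $\lambda$ puts more weight on low $q$, where $\beta$ is smaller. Chebyshev's inequality gives
\[
\int_a^c\lambda\beta\le\frac{\int_a^c\lambda}{c-a}\int_a^c\beta ,
\]
and concavity of $b$ gives $\frac{1}{c-a}\int_a^c\beta\le b(m)$. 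The left side equals $(L(a)-L(c))\,b(m)$ after the integration-by-parts identity, which is the desired inequality. I expect this step to be the main obstacle: the signs and boundary bookkeeping must come out right, and one must confirm that $L'\le0$ (immediate, since $S$ is decreasing, so $S(x(t))/S(x(q))\le1$ for $t>q$) so that Chebyshev applies.

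For the initial cell $[0,c^*]$ the boundary term at $0$ is absent, which is why that cell cannot be opened by this argument and survives. Applying the local improvement to each of the at most countably many noninitial pooled intervals, via a limiting argument that uses weak continuity of $V$ and dominated convergence for the sender integral, yields $W(M^*)\le\Phi_V(c^*)$. Hence lower censorship at $c^*$ is optimal. As a consistency check, \Cref{prop:lower-censorship} reduces to this argument when $V(\mu)=r(\text{mean})$ with $r$ convex.
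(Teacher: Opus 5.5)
Your argument is correct in substance, and its core is the paper's. Each noninitial pooled cell is weakly dominated by fully certifying it: on the receiver side by Jensen for the convex $V$, on the sender side by a Chebyshev-type inequality plus concavity of $b$. The initial cell carries the endpoint term and survives. Applying Chebyshev to $\lambda=-L'$ and $\beta=b\circ x$ and then Jensen for $b$ is an equivalent variant of the paper's step, which uses the tangent line of $b$ and then the covariance of $L'$ and $x$.

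The real difference is how you pass to general signals. The paper first bounds finite classifications by $\Phi_V(q_1)\le\Phi_V(c^*)$, where $c^*$ maximizes the continuous function $\Phi_V$. It then approximates an arbitrary $M$ by finite range coarsenings, using L\'evy's upward theorem for the posteriors. You instead open the countably many noninitial pooled cells directly, which avoids the martingale approximation. It works because an admissible $M$ equals $x$ a.e.\ off its nondegenerate level sets, and $\int_0^1V(\mu_M(q))\,dq+L(0)B(0)+\int_0^1L'(q)B(q)\,dq$ is additive over cells. Hence $W_V(M)=H_V(c_M)+\sum_k G_V(a_k,c_k)+\int_D\bigl[V(\delta_{x(q)})+L'(q)b(x(q))\bigr]dq$, with $G_V(a,c)=(c-a)V(\mu_{a,c})+[L(c)-L(a)]b(m(a,c))$ and $D$ the revealed set. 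This also makes your unverified appeal to an optimal $M^*$ unnecessary. The bound $W_V(M)\le\Phi_V(c_M)\le\max_c\Phi_V(c)$ holds for every admissible $M$, and the maximum exists because $\Phi_V$ is continuous on $[0,1]$.

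Two steps need repair.

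\emph{The sign of $L'$.} It does not follow from $S$ being decreasing, which only gives $L(q)\le 1-q$. With $s=S\circ x$ one has $L'(q)=-1+L(q)|s'(q)|/s(q)$, and this can be positive where $s$ drops sharply. Under the hypotheses the sign follows as in the paper: $L\ge0$, $L(1)=0$ and convexity force $L'\le0$. Your Chebyshev step only needs $L(a)\ge L(c)$, but that also comes from this argument.

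\emph{The sender bookkeeping.} The display is wrong in three ways.
\begin{enumerate}[noitemsep]
\item In the representation $L(0)B(0)+\int L'B$, a cell with $a>0$ has no boundary-jump terms, so adding the jumps to the interior change double counts.
\item Integration by parts gives $\int_a^cL\beta'=L(c)\beta(c)-L(a)\beta(a)-\int_a^cL'\beta$; your signs are reversed.
\item $(L(a)-L(c))b(m)\ge\int_a^cL\beta'$ is the wrong target; it is not even invariant to adding a constant to $b$.
\end{enumerate}
The correct condition is $\int_a^c\lambda\beta\le(L(a)-L(c))\,b(m)$, i.e.\ $\int_a^cL'(q)b(x(q))\,dq\ge[L(c)-L(a)]\,b(m)$. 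That is exactly what your Chebyshev-plus-Jensen chain proves, so the conclusion stands once the display is replaced.

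As an aside, the same formula gives $L'(q)>-1=L'(1)$ for $q<1$, and $L''(1)=s'(1)/s(1)<0$. So under $S'<0$ and $x'>0$ the hypothesis $L''\ge0$ on $(0,1)$ can never be satisfied; it would force $S$ to be constant on $x([0,1])$. This is a feature of the statement, not of your proof.
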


\begin{corollary}
Suppose that, after observing the certificate, the receiver chooses \(\alpha\in A\), where \(A\) is a compact metric space. Let the receiver have continuous utility \(u\colon A\times x([0,1])\to\mathbb R\). If \(L''(q)\ge0\) and \(b''(y)\le0\) for every \(q\in(0,1)\) and every \(y\in x([0,1])\), then lower-censorship is optimal. 
\end{corollary}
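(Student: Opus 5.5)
The last statement is the corollary for the receiver decision problem with compact action space \(A\) and continuous utility \(u\). I would prove it as a direct application of \Cref{prop:postanalog}. Define the receiver's indirect value on posteriors over \(X\coloneqq x([0,1])\) by
\[
V(\mu)\coloneqq\max_{\alpha\in A}\int_X u(\alpha,y)\,\mu(dy).
\]
By \Cref{prop:technology-representation} and \Cref{prop:general-cost}, welfare is then \(\int_0^1 V(\text{posterior induced at }q)\,dq-K(M)\). The sender-side hypotheses (\(L''\ge0\), \(b''\le0\), and the parts of \Cref{ass:scalar} involving \(x,b,S\)) are exactly those of \Cref{prop:postanalog}. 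It therefore suffices to check that \(V\) is bounded, weakly continuous, and convex.

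\textbf{Boundedness.} The set \(A\times X\) is compact, since \(X\) is a compact interval because \(x\) is continuous. So \(u\) is bounded, say \(|u|\le C\), and hence \(|V|\le C\). The maximum is attained because \(\alpha\mapsto\int u(\alpha,y)\,\mu(dy)\) is continuous on the compact set \(A\). This continuity follows from uniform continuity of \(u\) on \(A\times X\).

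\textbf{Convexity.} \(V\) is a pointwise supremum of the maps \(\mu\mapsto\int u(\alpha,\cdot)\,d\mu\), each of which is affine in \(\mu\). A supremum of affine functions is convex.

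\textbf{Weak continuity.} This is the step needing the most care.

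1. Set \(U(\alpha,\mu)\coloneqq\int u(\alpha,y)\,\mu(dy)\). I will show that \(U\) is jointly continuous on \(A\times\Delta(X)\), where \(\Delta(X)\) carries the weak topology.
2. Take \(\alpha_n\to\alpha\) and \(\mu_n\to\mu\) weakly. Split the difference as
\[
|U(\alpha_n,\mu_n)-U(\alpha,\mu)|\le \sup_{y}|u(\alpha_n,y)-u(\alpha,y)|+\Bigl|\int u(\alpha,\cdot)\,d\mu_n-\int u(\alpha,\cdot)\,d\mu\Bigr|.
\]
3. The first term vanishes by uniform continuity of \(u\) on the compact set \(A\times X\).
4. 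The second term vanishes by weak convergence, because \(u(\alpha,\cdot)\) is bounded and continuous on \(X\).
5. Since \(A\) is compact, Berge's maximum theorem then gives continuity of \(V(\mu)=\max_\alpha U(\alpha,\mu)\).
6. As a direct alternative to Berge: \(V\) is lower semicontinuous as a supremum of continuous functions. For upper semicontinuity, take maximizers \(\alpha_n\) for \(\mu_n\), pass to a convergent subsequence using compactness of \(A\), and apply joint continuity of \(U\).

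\textbf{Conclusion.} With these properties verified, \Cref{prop:postanalog} applies and yields optimality of lower censorship. One remark: the posterior-mean value \(r\) need not be convex here. The corollary avoids needing it because it works with the convexity of \(V\) on posteriors, so the \(r''\) condition of \Cref{prop:lower-censorship} is never invoked.
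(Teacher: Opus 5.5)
Your proposal is correct and matches the paper's intended argument: the paper gives no separate proof, treating the corollary as an immediate application of \Cref{prop:postanalog} with \(V(\mu)=\max_{\alpha\in A}\int u(\alpha,y)\,\mu(dy)\). Your checks that this \(V\) is bounded, weakly continuous (joint continuity of \(U\) plus Berge) and convex (a pointwise maximum of affine functionals) supply exactly the details the paper leaves implicit.
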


\section{Local Distinctions and Finite Optimality}\label{sec:finite}

I now provide a sufficient condition under which every optimum has finitely many categories. Specifically, I show that if, on every sufficiently small interval, every binary split is worse than pooling, then any optimal signal is finite-valued. The economic force is the same as in the cost-of-distinctions formula of \Cref{prop:general-cost}. A new boundary creates a finer message for the receiver, but it also creates a new prize for senders. Types just above the boundary obtain a better certificate than types just below it, and that prize must be sustained by costly self-selection. On a sufficiently small interval, the receiver is distinguishing nearby qualities, while the sender side still treats the boundary as a payoff-relevant rank. My condition is, therefore, that at sufficiently fine scales, this local prize is too expensive to be worth creating.

For \(0\le a<c\le1\), define \(m(a,c)\coloneqq\frac{1}{c-a}\int_a^cx(q)dq\). For an interval \(I=\left[a,c\right]\), write \(\E_I\) for the expectation under the uniform distribution on \(I\). An admissible refinement of \(I\) is a nondecreasing posterior-mean signal on \(I\) satisfying \(M=\E_I[x\mid M]\) a.s. Pooling \(I\) is the constant signal \(M(q)=m(a,c)\).

Define
\[
        G(a,c)\coloneqq(c-a)r(m(a,c))+\left[L(c)-L(a)\right]b(m(a,c)).
\]
The first cell, \(\left[0,c\right]\), has an extra term in the welfare accounting, so recall
\[
        H(c)\coloneqq cr(m(0,c))+L(c)b(m(0,c)).
\]
Recall also the designer's objective
\[
        W(M)\coloneqq\int_0^1r(M(q))dq-\int_{\left(0,1\right]}L(q)d[b(M(q))],
\]
and its finite-classification form: if \(0=q_0<q_1<\cdots<q_N=1\), then
\[
        W(M)=H(q_1)+\sum_{j=2}^NG(q_{j-1},q_j).
\]

Accordingly, \(G(a,c)\) is the net payoff from treating a noninitial interval \(\left[a,c\right]\) as one category. Splitting it at \(d\) creates one additional boundary and, thus, one additional local prize. The comparison \(G(a,d)+G(d,c)-G(a,c)\) is the net value of that prize: receiver precision on the two sides minus the extra implementation cost required to sustain the distinction. The function \(H\) is the analogous payoff for the initial interval, where the bottom of the certification ladder includes an endpoint term.

A binary split is the smallest possible act of categorization: one pooled interval is replaced by two adjacent pooled intervals. The next assumption says that every sufficiently fine version of this act destroys surplus.
\begin{assumption}\label{ass:5}
There exists \(\delta\in\left(0,1\right]\) such that \(G(a,d)+G(d,c)<G(a,c)\) for every \(0<a<d<c\le1\) with \(c-a<\delta\), and \(H(d)+G(d,c)<H(c)\) for every \(0<d<c<\delta\).
\end{assumption}

Using integration by parts, changing \(M\) only on an interval \(I=\left[a,c\right]\) with \(a>0\) changes welfare only through
\[
        \mathcal W_I(M)
        \coloneqq
        \int_a^cr(M(q))dq+\int_a^cb(M(q))L'(q)dq.
\]
Pooling \(I\) delivers \(G(a,c)\). For a bottom interval \(\left[0,c\right]\), define
\[
        \mathcal W^0_{\left[0,c\right]}(M)
        \coloneqq
        \int_0^cr(M(q))dq
        +
        L(0)b(\underline M)
        +
        \int_0^cb(M(q))L'(q)dq,
\]
where \(\underline M\coloneqq\operatorname*{ess\,inf}_{q\in\left[0,c\right]}M(q)\). Pooling \(\left[0,c\right]\) delivers \(H(c)\).

A first step is to move from one extra boundary to many. A finite refinement of a short interval creates a ladder of local prizes. If the first new rung is already wasteful, and the remaining rungs refine a shorter tail of the same interval, then the entire ladder is wasteful. This induction constitutes the next lemma.
\begin{lemma}\label[lemma]{lem:finite-refinement-pooling}
Posit \Cref{ass:5}. If \(0<a<c\le1\) and \(c-a<\delta\), then for every finite refinement \(a=t_0<t_1<\cdots<t_N=c\),
\(\sum_{j=1}^NG(t_{j-1},t_j)\le G(a,c)\), with strict inequality whenever \(N\ge2\). Similarly, if \(0<h<\delta\), then for every finite refinement \(0=t_0<t_1<\cdots<t_N=h\), \(H(t_1)+\sum_{j=2}^NG(t_{j-1},t_j)\le H(h)\), with strict inequality whenever \(N\ge2\).
\end{lemma}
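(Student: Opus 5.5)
We argue by induction on the number of cells \(N\) in the refinement, handling the noninitial statement first and then the bottom statement, which uses the noninitial one. For \(N=1\) both claims are equalities: the refinement is the trivial one, \(\sum_{j=1}^1G(t_0,t_1)=G(a,c)\) and \(H(t_1)=H(h)\). So assume \(N\ge2\) and that the claim (with strict inequality for two or more cells) holds for all refinements with fewer than \(N\) cells of any interval of length less than \(\delta\).

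\textbf{Noninitial intervals.} Fix \(0<a<c\le1\) with \(c-a<\delta\) and a refinement \(a=t_0<t_1<\cdots<t_N=c\), \(N\ge2\). The tail \([t_1,c]\) is a noninitial interval (since \(t_1>a>0\)) of length less than \(\delta\), refined into \(N-1\) cells. By the induction hypothesis, \(\sum_{j=2}^NG(t_{j-1},t_j)\le G(t_1,c)\). Hence
\[
\sum_{j=1}^NG(t_{j-1},t_j)\le G(a,t_1)+G(t_1,c)<G(a,c),
\]
where the strict inequality is the first part of \Cref{ass:5} with \(d=t_1\in(a,c)\) and \(c-a<\delta\). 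This is exactly the "first rung wasteful, rest refines a shorter tail" logic.

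\textbf{Bottom interval.} Fix \(0<h<\delta\) and \(0=t_0<t_1<\cdots<t_N=h\), \(N\ge2\). The tail \([t_1,h]\) satisfies \(t_1>0\) and \(h-t_1<h<\delta\), so the noninitial result just proved gives \(\sum_{j=2}^NG(t_{j-1},t_j)\le G(t_1,h)\). Then
\[
H(t_1)+\sum_{j=2}^NG(t_{j-1},t_j)\le H(t_1)+G(t_1,h)<H(h),
\]
by the second part of \Cref{ass:5} with \(d=t_1\) and \(c=h\); its hypothesis \(0<d<c<\delta\) holds.

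The only point requiring care, and the place I expect any obstacle, is verifying that every interval to which \Cref{ass:5} or the inductive hypothesis is applied satisfies the hypotheses. For the noninitial case, the left endpoint must be strictly positive and the length less than \(\delta\); both are inherited because subintervals are shorter and start at \(t_1>0\). For the bottom case, the right endpoint must be below \(\delta\). Additionally, the strict/weak bookkeeping must be right: the tail may have a single cell, giving only equality, but strictness always comes from the first split.
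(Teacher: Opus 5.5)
Your proof is correct and matches the paper's: induct on $N$, peel off the first cut $t_1$, get strictness from \Cref{ass:5} at $d=t_1$, and bound the tail $[t_1,c]$ weakly by the induction hypothesis. Your bottom case, which applies the noninitial result to $[t_1,h]$ and then the second half of \Cref{ass:5}, is an explicit version of the paper's ``same logic'' remark, and your checks that the hypotheses transfer to each interval are accurate.
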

The next step ports the analysis to the continuum--it allows the signal inside a short interval to have infinitely many posterior means. Economically, this is a continuum of tiny status differences rather than a finite ladder. The argument is to group nearby posterior means, apply the finite result to the resulting finite ladder, and then let the groups shrink. This reveals that a continuum of tiny distinctions inherits the same local cost problem as a finite collection of tiny distinctions.

\begin{lemma}\label[lemma]{lem:small-block-pooling}
Posit \Cref{ass:5}. Let \(I=\left[a,c\right]\), with \(a>0\) and \(c-a<\delta\). Then every admissible refinement \(M\) of \(I\) satisfies \(\mathcal W_I(M)\le G(a,c)\), with strict inequality whenever \(M\) is nonconstant. Likewise, if \(0<h<\delta\), then every admissible refinement \(M\) of \(\left[0,h\right]\) satisfies \(\mathcal W^0_{\left[0,h\right]}(M)\le H(h)\), with strict inequality whenever \(M\) is nonconstant.
\end{lemma}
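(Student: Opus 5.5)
The plan is to approximate an arbitrary admissible refinement by finite classifications built from its own level sets. Each approximant then falls under \Cref{lem:finite-refinement-pooling}, and a limit argument transfers the inequality. I treat the interior interval \(I=[a,c]\) and handle \([0,h]\) in the same way, adding the endpoint term.

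\emph{Approximating ladders.} Fix a nonconstant admissible refinement \(M\) of \(I\). Because \(M\) is nondecreasing and \(M=\E_I[x\mid M]\), its level sets are intervals up to null sets. Every jump of \(M\) separates cells, and on any interval where \(M\) is strictly increasing we have \(M=x\) a.e. For each \(n\), I partition the range \([x(a),x(c)]\) into \(n\) equal bins and let \(M_n\) pool the preimage of each bin. Since the preimages are intervals \([t_{j-1},t_j]\), \(M_n\) takes the value \(m(t_{j-1},t_j)\) on cell \(j\). Coarsening preserves Bayes-plausibility (tower property), so \(M_n\) is a finite admissible refinement. A direct integration by parts over a step function gives
\[
\mathcal W_I(M_n)=\sum_j G(t_{j-1},t_j).
\]
\Cref{lem:finite-refinement-pooling} then yields \(\mathcal W_I(M_n)\le G(a,c)\).

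\emph{Passing to the limit.} By construction \(|M_n-M|\le (x(c)-x(a))/n\) a.e., because each cell mean lies in the same bin as the values of \(M\) on that cell. Since \(r\) and \(b\) are continuous on a compact set and \(L'\) is bounded under \Cref{ass:scalar}, dominated convergence gives \(\mathcal W_I(M_n)\to\mathcal W_I(M)\). Hence \(\mathcal W_I(M)\le G(a,c)\). For the bottom interval, the extra term \(L(0)b(\underline M)\) also converges: the lowest bin's cell mean tends to \(\underline M\), because the bottom cell of \(M_n\) shrinks onto the bottom level set of \(M\), or onto \(0\) if that level set is null. So \(\mathcal W^0_{[0,h]}(M)\le H(h)\).

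\emph{Strictness, the main obstacle.} Limits only preserve the weak inequality, so strictness needs a separate argument. My approach is to locate a point \(d\in(a,c)\) that genuinely separates values of \(M\): take \(d\) with \(\operatorname{ess\,sup}_{[a,d]}M\le\operatorname{ess\,inf}_{[d,c]}M\), with \(M\) nonconstant so that the split is nontrivial. Then \(M\) is, up to null sets, the concatenation of an admissible refinement \(M^-\) of \([a,d]\) and an admissible refinement \(M^+\) of \([d,c]\). Conditioning on the event \(\{q\le d\}\), which is \(M\)-measurable, preserves Bayes-plausibility on each piece. Because \(\mathcal W_I\) is additive across these pieces, applying the weak inequality already proved to each piece gives
\[
\mathcal W_I(M)\le G(a,d)+G(d,c)<G(a,c),
\]
where the strict step is \Cref{ass:5}. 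For the bottom case, the same decomposition gives \(\mathcal W^0_{[0,h]}(M)\le H(d)+G(d,h)<H(h)\). The first piece keeps the endpoint term, since \(\underline M\) is attained on \([0,d]\); the second piece requires \(d>0\), which is automatic from \(d\in(0,h)\). Existence of such a \(d\) follows from the monotonicity of \(M\) together with nonconstancy: take any \(d\) strictly between the two points where distinct essential values are attained, chosen as a continuity point or a jump point of the monotone representative. The delicate part is confirming that the boundary types at \(d\) form a null set, so that the decomposition of \(\mathcal W\) incurs no boundary term.
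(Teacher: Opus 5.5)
Your proposal is correct and matches the paper's proof step for step. Both bin the range of \(M\), pool the preimages to get finite ladders, apply \Cref{lem:finite-refinement-pooling}, pass to the limit, and obtain strictness from a single split at \(d\) combined with \Cref{ass:5}. One point needs tightening: take \(d\) as the boundary of a sublevel set \(\{M\le y\}\), with both \(\{M\le y\}\) and \(\{M>y\}\) of positive measure, as the paper does. A continuity point inside a flat stretch of \(M\) would make \(\{q\le d\}\) fail to be \(M\)-measurable, so the two pieces would not be admissible refinements. The ``boundary term'' you worry about is vacuous, since a single point is Lebesgue-null.
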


The final step is a counting argument. An infinite classification is an infinitely long prize ladder. In a unit mass of types, sufficiently many rungs force two neighboring rungs to lie in a short quantile interval. Their union is then a short block containing a genuine distinction. But from \Cref{lem:small-block-pooling}, that distinction can be removed profitably.
\begin{theorem}\label{thm:finite-partitions}
Under \Cref{ass:5}, every optimal admissible signal has finite essential range, i.e., is a.e. equivalent to a finite monotone classification. Moreover, the number of categories is at most \(1+2/\delta\).
\end{theorem}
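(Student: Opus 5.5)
Let \(M^\ast\) be an optimal admissible signal; \Cref{prop:scalar-existence} guarantees one exists. The plan is to show that any two consecutive ``distinctions'' of \(M^\ast\) must lie at least roughly \(\delta/2\) apart in quantile space. Otherwise we can find a short block on which \(M^\ast\) is nonconstant, and \Cref{lem:small-block-pooling} lets us pool that block and strictly raise welfare, contradicting optimality.

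\textbf{Step 1 (localization).} First I would verify that replacing \(M^\ast\) on a block \(I=[a,c]\) by the pooled value \(m(a,c)\) gives a signal \(\tilde M\) that is still admissible. Monotonicity needs a block whose endpoints do not cut through a cell of \(M^\ast\), i.e., \(M^\ast\) is constant across neither endpoint. Then \(\tilde M\) is nondecreasing and Bayes-plausible. Next I would check that \(W(\tilde M)-W(M^\ast)=G(a,c)-\mathcal W_I(M^\ast|_I)\), using the integration-by-parts identity
\[
-\int_{(0,1]}L\,dB=L(0)B(0)+\int_0^1 B L'\,dq,
\]
which holds since \(L(1)=0\). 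For \(a=0\) the same computation gives \(H(c)-\mathcal W^0_{[0,c]}(M^\ast)\). The restriction \(M^\ast|_I\) is an admissible refinement of \(I\) precisely because \(I\) is a union of level sets of \(M^\ast\) up to null sets. \Cref{lem:small-block-pooling} then shows that whenever \(|I|<\delta\) and \(M^\ast\) is nonconstant on \(I\), pooling strictly improves welfare.

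\textbf{Step 2 (counting).} Consider the distinct essential values of \(M^\ast\). Their level sets are intervals, possibly degenerate points, ordered by quantile. Suppose there were more than \(1+2/\delta\) of them. I would group the level sets and choose boundary-respecting points so that some block \([a,c]\) of length below \(\delta\) contains at least two distinct values of \(M^\ast\), with endpoints at cell boundaries. Concretely, take consecutive cells \(C_j\) and \(C_{j+1}\) whose combined length is less than \(\delta\). Such a pair must exist: if \(|C_j|+|C_{j+1}|\ge\delta\) for every \(j\), summing over disjoint pairs gives \(\lfloor N/2\rfloor\,\delta\le1\), so \(N\le 2/\delta+1\). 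Their union is an interval block that respects the cells and has length below \(\delta\). Applying Step 1 to it yields a contradiction, which gives both finiteness and the stated bound. Where the block is the initial one (\(a=0\)), I would invoke the \(H\)-part of \Cref{lem:small-block-pooling} instead.

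\textbf{Main obstacle.} The delicate case is infinite essential range with accumulating or continuous parts, where ``cells'' may be single points of measure zero and there is no well-defined next cell. I would handle it as follows. If the range is infinite, the quantiles \(\{q: M^\ast \text{ takes values in a given range}\}\) allow us to pick values \(v_1<v_2<v_3\) such that the quantile interval of points with \(M^\ast\in[v_1,v_3]\) has length below \(\delta\). This works because infinitely many disjoint value-bands partition total length \(1\), so some band is short. I would then take \(I\) to be the closure of \(\{q: v_1\le M^\ast(q)\le v_3\}\), with endpoints chosen at the band's edges. To ensure \(I\) is a union of level sets, I would choose \(v_1\) and \(v_3\) as values attained by \(M^\ast\), or as gaps in its range, so that Bayes-plausibility restricts to \(I\). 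Making this choice so that \(M^\ast|_I\) is genuinely an admissible refinement is the step I expect to take the most care.
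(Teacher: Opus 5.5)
Your proposal is correct and follows essentially the paper's route: pool a short block via \Cref{lem:small-block-pooling} (using the $H$-part when the block starts at $0$), and use a counting argument to find two adjacent cells whose union is shorter than $\delta$, which gives both finiteness and the bound $1+2/\delta$. Your count sums over disjoint pairs, whereas the paper sums over all overlapping adjacent pairs; both give the same bound. For infinite range, the paper does what your value-band sketch suggests: it coarsens the essential range into $N>1+2/\delta$ value intervals with nonnull preimages and reruns the same count. The extra care you anticipate in choosing $v_1,v_3$ is unnecessary. The preimage of any value interval is automatically $\sigma(M^\ast)$-measurable, so Bayes plausibility restricts to the block. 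The pooled mean then lies inside the band, so monotonicity is preserved.
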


\section{Signaling tournaments}\label{sec:tournaments}

The same bottom-pooling improvement we identified in \secref{sec:coarse} applies when the sender prize is generated by a tournament. Let us now replace \(b(M)\) with a (winner-take-all) tournament prize. Fix \(n\ge2\) and \(s>0\). There are \(n\) candidates with independent quantiles \(q_i\sim U\left[0,1\right]\). The designer commits to a common monotone classification. Each candidate faces the same action technology as above: type \(q_i\) pays \(aS(x(q_i))\) for action \(a\). The candidate with the highest public label receives prize \(s\), and if several candidates share the highest label, they split the prize equally. The receiver value remains function \(r\) of the posterior mean attached to the selected label.

Recall that \(M^F(q)=x(q)\) denotes full certification, and that for  \(\varepsilon>0\), \(M^\varepsilon\) pools \(\left[0,\varepsilon\right]\) and fully certifies all quantiles above \(\varepsilon\). Let \(R^F\) and \(K^F\) now denote receiver value and total implementation cost across all \(n\) candidates under \(M^F\), and let \(R^\varepsilon\) and \(K^\varepsilon\) denote their counterparts under \(M^\varepsilon\).

The proof of \Cref{prop:general-cost} applies to any induced nondecreasing private-prize process \(B\), and so the single-candidate implementation cost is
\(\int_{\left(0,1\right]}L(q)dB(q)\), where we represent \(B\) right-continuously. 
\begin{proposition}\label[proposition]{prop:contest}
Suppose \(x\), \(r\), and \(S\) satisfy the regularity conditions in \Cref{ass:scalar}. Then there exists \(\bar{\varepsilon}>0\) such that, for every \(\varepsilon\in\left(0,\bar{\varepsilon}\right)\),
\(R^\varepsilon-K^\varepsilon>R^F-K^F\), i.e., full certification is not welfare optimal in the tournament problem.
\end{proposition}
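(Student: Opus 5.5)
Under full certification the private prize of a type-\(q\) candidate is its probability of holding the strictly highest label times \(s\): \(B^F(q)=s\,q^{n-1}\), because ties occur with probability zero. Under \(M^\varepsilon\) the types above \(\varepsilon\) still have \(B^\varepsilon(q)=s\,q^{n-1}\). A candidate in the bottom cell \([0,\varepsilon]\) wins only if all rivals are also in the bottom cell, and then it splits the prize \(n\) ways. Hence
\[
B^\varepsilon(q)=s\,\varepsilon^{n-1}/n \quad\text{for } q\le\varepsilon .
\]
Relative to full certification, this introduces a jump at \(\varepsilon\) of size \(s\varepsilon^{n-1}(1-1/n)\), and the continuous increase of \(B\) on \([0,\varepsilon]\) disappears. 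Since the proof of \Cref{prop:general-cost} applies to any nondecreasing prize process, the per-candidate cost is \(\int_{(0,1]}L\,dB\). I therefore compute
\[
\frac{K^F-K^\varepsilon}{n}
=\int_0^\varepsilon L(q)\,s(n-1)q^{n-2}\,dq-L(\varepsilon)\,s\,\varepsilon^{n-1}\Bigl(1-\frac1n\Bigr).
\]
Because \(L\) is continuous, the first term equals \(L(0)s\varepsilon^{n-1}+o(\varepsilon^{n-1})\) and the second equals \(L(0)s\varepsilon^{n-1}(1-1/n)+o(\varepsilon^{n-1})\). The cost saving per candidate is thus \(L(0)s\varepsilon^{n-1}/n+o(\varepsilon^{n-1})\), which is strictly positive for small \(\varepsilon\) since \(L(0)>0\) (\(S>0\)). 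This is the tournament version of the first-order saving in \Cref{thm:generic-coarseness}. It is now of order \(\varepsilon^{n-1}\) rather than \(\varepsilon\), because the bottom cell's prize is small.

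On the receiver side, \(r\) is evaluated at the posterior mean of the selected (winning) label. The pooled label is selected only when all \(n\) candidates lie in \([0,\varepsilon]\), which has probability \(\varepsilon^n\). Outside that event the winner is fully certified under both signals, so the selected posterior mean is the same. Conditional on that event, full certification yields \(r(x(q_{\max}))\), while \(M^\varepsilon\) yields \(r(m_\varepsilon)\); the two differ by at most \(\sup|r'|\cdot(x(\varepsilon)-x(0))=O(\varepsilon)\) because \(r\) and \(x\) are \(C^1\). Hence \(|R^F-R^\varepsilon|=O(\varepsilon^{n+1})\), which is \(o(\varepsilon^{n-1})\). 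If instead \(R\) is meant as a sum of \(r(M(q_i))\) over all candidates, the loss is \(n\int_0^\varepsilon|r(x(q))-r(m_\varepsilon)|\,dq=O(\varepsilon^2)\). That order is \(o(\varepsilon^{n-1})\) only when \(n=2\), so for this reading I would use the second-order expansion from the proof of \Cref{thm:generic-coarseness} to get \(O(\varepsilon^3)\), which is still too large for \(n\ge 5\). I will therefore adopt the stated reading, where receiver value is attached to the selected label, under which the \(O(\varepsilon^{n+1})\) bound holds for all \(n\ge2\).

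Combining the two estimates gives
\[
(R^\varepsilon-K^\varepsilon)-(R^F-K^F)=L(0)s\varepsilon^{n-1}+o(\varepsilon^{n-1})>0
\]
for all sufficiently small \(\varepsilon\), which supplies \(\bar\varepsilon\). I expect the main obstacle to be justifying the cost formula in the tournament setting. The prize \(B\) is now an equilibrium object determined by the other candidates' labels, so the argument is that under a common monotone classification, and given the equilibrium behavior of others, each candidate faces a fixed nondecreasing prize schedule \(B(q)\). The single-agent IC and envelope argument behind \Cref{prop:general-cost}, with the right-continuous convention handling the jump at \(\varepsilon\), then yields the least-cost implementation.
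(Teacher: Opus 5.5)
Your proof is correct and follows the paper's argument essentially step for step. You use the same prize schedules $sq^{n-1}$ and $s\varepsilon^{n-1}/n$, the same per-candidate cost difference from the cost-of-distinctions formula, and the same observation that the selected label changes only on the probability-$\varepsilon^n$ event that all candidates lie in $[0,\varepsilon]$, so an $O(\varepsilon^{n+1})$ receiver loss is set against a cost saving of order $\varepsilon^{n-1}$. The only difference is cosmetic: you use little-$o$ asymptotics, whereas the paper fixes $\eta=1/(2(2n-1))$ to obtain the explicit bound $K^F-K^\varepsilon\ge sL(0)\varepsilon^{n-1}/2$.
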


The social value of distinguishing among the worst types is less than the price of such distinctions. The juice is not worth the squeeze.

\appendix

\section{Omitted Proofs}\label[appendix]{app:omit}

\subsection{Proof of \texorpdfstring{\Cref{lem:monom}}{lemma}}

\begin{proof}[Proof of \Cref{lem:monom}]
Take \(q_L<q_H\) (outside the null set of ordered pairs for which the relevant incentive constraints may fail) and write \(M_L=M(q_L)\), \(M_H=M(q_H)\), \(z_L=z(q_L)\), and \(z_H=z(q_H)\). From \eqref{eq:IC},
\[b(M_H)-b(M_L) \le (z_H-z_L)S(x(q_L)),
 \quad \text{and} \quad b(M_H)-b(M_L)
        \ge
        (z_H-z_L)S(x(q_H)).\]
Because \(S(x(q_L))>S(x(q_H))\), these inequalities imply \(z_H\ge z_L\). Consequently, \(b(M_H)\ge b(M_L)\), and as \(b\) is strictly increasing, \(M_H\ge M_L\). Thus, \(M\) is increasing a.e.--by \Cref{fn:rep}, I use its increasing version.
\end{proof}

\subsection{Proof of \texorpdfstring{\Cref{{prop:technology-representation}}}{proposition}}

\begin{proof}[Proof of \Cref{{prop:technology-representation}}]
First, fix an equilibrium of a deterministic certification technology \(\gamma\), and write \(y(q)=\gamma(a(q))\). For type \(q\)'s certificate, type \(q\) chooses the least action needed to obtain it in equilibrium. Indeed, if type \(q\) obtains \(y(q)\) and \(a(q)\) were strictly above
\[
\inf\left\{\alpha\ge0\colon\gamma(\alpha)=y(q)\right\},
\]
then there would be some \(\alpha<a(q)\) with \(\gamma(\alpha)=y(q)\), and type \(q\) could strictly lower its cost without changing its certificate. Let \(z(q)\) denote this least action. Then type \(q\)'s payoff is \(b(M(q))-z(q)S(x(q))\).

By the definition of \(M\), \(M(Q)=\mathbb E\left[x(Q)\mid y(Q)\right]\) a.s. Since \(M(Q)\) is measurable with respect to \(y(Q)\),
\[
\mathbb E\left[x(Q)\mid M(Q)\right]
=\mathbb E\left[\mathbb E\left[x(Q)\mid y(Q)\right]\mid M(Q)\right]
=\mathbb E\left[M(Q)\mid M(Q)\right]
=M(Q)
\]
a.s. Thus, \(M=\mathbb E[x\mid M]\) a.s. Type \(q\) can obtain type \(t\)'s certificate by choosing \(z(t)\). Hence, equilibrium optimality implies
\[
b(M(q))-z(q)S(x(q))\ge b(M(t))-z(t)S(x(q))
\]
for a.e. \(q,t\in\left[0,1\right]\). Consequently, \((M,z)\) satisfies \eqref{eq:IC}. \Cref{lem:monom} states that \(M\) is nondecreasing, so \(M\in\mathcal M\).

Conversely, fix \(M\in\mathcal M\), write \(s(q)\coloneqq S(x(q))\), and define \(z^*(q) \coloneqq\int_{\left(0,q\right]}\frac{dB(u)}{s(u)}\). If \(q<t\), then
\[
z^*(t)-z^*(q)=\int_{\left(q,t\right]}\frac{dB(u)}{s(u)}.
\]
Since \(s\) is decreasing, \(s(q)\ge s(u)\ge s(t)\) for \(u\in\left(q,t\right]\). Therefore,
\[
\left[z^*(t)-z^*(q)\right]s(q)\ge B(t)-B(q)\ge\left[z^*(t)-z^*(q)\right]s(t),
\]
which are exactly the two incentive constraints comparing types \(q\) and \(t\). Hence, \((M,z^*)\) satisfies \eqref{eq:IC}.

We now construct a deterministic certification technology. First observe that \(z^*(q)=z^*(t)\) if and only if \(M(q)=M(t)\). Indeed, if \(q<t\) and \(z^*(q)=z^*(t)\), then
\(\int_{\left(q,t\right]}\frac{dB(u)}{s(u)}=0\). Since \(1/s\) is bounded away from zero and \(dB\) is nonnegative, \(B(t)=B(q)\). Since \(b\) is strictly increasing, \(M(t)=M(q)\). The reverse implication follows from the monotonicity of \(M\): if \(M(q)=M(t)\), then \(M\), and, thus, \(B\), is constant on \(\left[q,t\right]\), so \(z^*(q)=z^*(t)\).

Set \(Y\coloneqq M(\left[0,1\right])\), and define \(\gamma\colon\left[0,\infty\right)\to Y\) by
\[
\gamma(\alpha)\coloneqq
\begin{cases}
M(q),&\text{if }\alpha=z^*(q)\text{ for some }q\in\left[0,1\right],\\
M(0),&\text{otherwise.}
\end{cases}
\]
The preceding paragraph reveals that this is well-defined.

Consider the action rule \(a(q)=z^*(q)\). Then \(\gamma(a(q))=M(q)\). Set the posterior mean of every certificate \(y\in Y\) equal to \(y\). As \(\gamma(a(Q))=M(Q)\), Bayes plausibility of \(M\) implies
\[
\mathbb E\left[x(Q)\mid\gamma(a(Q))\right]
=\mathbb E\left[x(Q)\mid M(Q)\right]
=M(Q)
=\gamma(a(Q))
\]
a.s.

It remains to check deviations. Fix type \(q\). If \(\alpha\in z^*(\left[0,1\right])\), choose \(t\) such that \(\alpha=z^*(t)\). Then deviating to \(\alpha\) produces payoff
\[
b(M(t))-\alpha s(q)=B(t)-z^*(t)s(q)\le B(q)-z^*(q)s(q),
\]
where the inequality is \eqref{eq:IC}. If \(\alpha\notin z^*(\left[0,1\right])\), then deviating to \(\alpha\) gives the certificate \(M(0)\), and so yields payoff
\[
b(M(0))-\alpha s(q)\le B(0)\le B(q)-z^*(q)s(q),
\]
where the last inequality follows from
\[
z^*(q)s(q)=s(q)\int_{\left(0,q\right]}\frac{dB(u)}{s(u)}\le B(q)-B(0),
\]
since \(s(q)\le s(u)\) for \(u\le q\). Accordingly, no deviation is profitable, and \(a(q)=z^*(q)\) is an equilibrium.

Finally, \(z^*(q)\) is the least action needed to obtain type \(q\)'s certificate. If \(M(q)=M(0)\), then \(z^*(q)=0\). If \(M(q)>M(0)\) and \(\alpha<z^*(q)\), then \(\gamma(\alpha)\neq M(q)\). Indeed, if \(\alpha\notin z^*(\left[0,1\right])\), then \(\gamma(\alpha)=M(0)\). If instead \(\alpha=z^*(t)\) for some \(t\), then \(\gamma(\alpha)=M(t)=M(q)\) would imply \(z^*(t)=z^*(q)\), contradicting \(\alpha<z^*(q)\). Accordingly, the least action needed to obtain type \(q\)'s certificate is \(z^*(q)\).
\end{proof}

\subsection{Proof of \texorpdfstring{\Cref{prop:general-cost}}{proposition}}

\begin{proof}[Proof of \Cref{prop:general-cost}]
Write \(s(q)=S(x(q))\) and \(H(q)=\int_q^1s(t)dt\), so that \(L(q)=H(q)/s(q)\). \Cref{prop:technology-representation} tells us that \(z^*(q) = \int_{(0,q]}\frac{dB(u)}{s(u)}\) satisfies \eqref{eq:IC} for \(M\) and is induced by a deterministic certification technology.

The expected action cost of \(z^*\) is
\[
        \int_0^1z^*(q)s(q)dq
        =
        \int_0^1
        \left[
        \int_{(0,q]}\frac{dB(u)}{s(u)}
        \right]
        s(q)dq.
\]
By Fubini's theorem,
\[
        \int_0^1z^*(q)s(q)dq
        =
        \int_{(0,1]}
        \frac{\int_u^1s(q)dq}{s(u)}
        dB(u)
        =
        \int_{(0,1]}L(u)dB(u).
\]
Consequently, \(K(M) \leq \int_{(0,1]}L(q)dB(q)\).

It remains to show that no implementation is cheaper. Let \(z\) be any action requirement satisfying \eqref{eq:IC} for this \(M\). Since \eqref{eq:IC} is an a.e. condition, we understand the inequalities below as holding for ordered pairs outside the (null) exceptional set. This is enough for our mesh argument: we can choose the grid points from regular pairs before refining to the limit. For all \(q<t\), \[[z(t)-z(q)]s(t) \leq B(t)-B(q) \leq [z(t)-z(q)]s(q).\]
Since \(B\) is nondecreasing and \(s>0\), the right-hand inequality implies \(z(t)\ge z(q)\). Thus, \(z\) is nondecreasing.

Moreover,
\[\tag{\(A1\)}\label{eq:mesh}
        z(t)-z(q)
        \ge
        \frac{B(t)-B(q)}{s(q)}.
\]
Fix \(0\le a<b\le1\), and take a partition
\[
        a=t_0<t_1<\cdots<t_n=b.
\]
Summing \eqref{eq:mesh} over adjacent partition points, we have
\[z(b)-z(a)
        \ge
        \sum_{i=1}^n
        \frac{B(t_i)-B(t_{i-1})}{s(t_{i-1})}.
\]
Letting the mesh of the partition go to zero yields
\[
        z(b)-z(a)
        \ge
        \int_{(a,b]}\frac{dB(u)}{s(u)}.
\]
Equivalently, as Stieltjes measures, \(dz(u) \geq \frac{dB(u)}{s(u)}\).

Since \(z\) is nondecreasing,
\[
        z(q)
        =
        z(0)+\int_{(0,q]}dz(u).
\]
Using Fubini's theorem again,
\[
        \int_0^1z(q)s(q)dq
        =
        z(0)H(0)
        +
        \int_{(0,1]}H(u)dz(u).
\]
Because \(z(0)\ge0\), \(H(u)\ge0\), and \(dz(u)\ge dB(u)/s(u)\),
\[
        \int_0^1z(q)s(q)dq
        \ge
        \int_{(0,1]}H(u)\frac{dB(u)}{s(u)}
        =
        \int_{(0,1]}L(u)dB(u).
\]
Thus, every implementation costs at least \(\int_{(0,1]}L(q)dB(q)\), while \(z^*\) attains this cost. This proves the formula.
\end{proof}

\subsection{Proof of \texorpdfstring{\Cref{prop:scalar-existence}}{proposition}}

\begin{proof}[Proof of \Cref{prop:scalar-existence}]
The set \(\mathcal M\) is nonempty, since the constant signal \(M(q)=\int_0^1x(t)dt\) is admissible.

For \(M\in\mathcal M\), let \(\underline M\coloneqq\operatorname*{ess\,inf}_{q\in\left[0,1\right]}M(q)\). Using the cost-of-distinctions formula and integration by parts, we rewrite welfare as
\[
        W(M)
        =
        \int_0^1r(M(q))dq
        +
        L(0)b(\underline M)
        +
        \int_0^1b(M(q))L'(q)dq.
\]
Since \(b\) and \(r\) are bounded on \(\left[x(0),x(1)\right]\), and \(L'\) is bounded on \(\left[0,1\right]\), \(W\) is bounded above on \(\mathcal M\). Let \(\{M_n\}\subseteq\mathcal M\) be a sequence such that \(W(M_n) \to \sup_{M \in \mathcal{M}} W(M)\).

Each \(M_n\) is nondecreasing and takes values in the compact interval \(\left[x(0),x(1)\right]\). By Helly's selection theorem, passing to a subsequence if necessary, there is a nondecreasing function \(M\colon\left[0,1\right]\to\left[x(0),x(1)\right]\) such that \(M_n(q)\to M(q)\) at every continuity point of \(M\). Since monotone functions have at most countably many discontinuities, \(M_n\to M\) a.e. Since the sequence is uniformly bounded, by dominated convergence, \(M_n\to M\) in \(L^1\).

We next show that \(M\in\mathcal M\). For every continuous function \(\varphi\colon\left[x(0),x(1)\right]\to\mathbb R\), admissibility of \(M_n\) delivers
\[
        \int_0^1\left(x(q)-M_n(q)\right)\varphi(M_n(q))dq=0.
\]
Passing to the limit by dominated convergence,
\[\tag{\(A2\)}\label{eq:2}\int_0^1\left(x(q)-M(q)\right)\varphi(M(q))dq=0.
\]
Define the finite signed measure \(\zeta\) on \(\left[x(0),x(1)\right]\) by
\[
        \zeta(A)\coloneqq
        \int_{\{q:M(q)\in A\}}
        \left(x(q)-M(q)\right)dq.
\]

Using \eqref{eq:2}, \(\int\varphi d\zeta=0\) for every continuous \(\varphi\). Thus, \(\zeta=0\), which means \(\mathbb E[x-M\mid M]=0\) a.s., so \(M=\mathbb E[x\mid M]\) a.s. Thus, \(M\in\mathcal M\).

It remains to verify upper semicontinuity of \(W\). Since \(M_n\to M\) a.e. and \(b,r\) are continuous and bounded on \(\left[x(0),x(1)\right]\),
\[\tag{\(A3\)}\label{eq:3} \int_0^1r(M_n(q))dq\to\int_0^1r(M(q))dq
\]
and
\[\tag{\(A4\)}\label{eq:4}\int_0^1b(M_n(q))L'(q)dq\to\int_0^1b(M(q))L'(q)dq.
\]
For the endpoint term (\(L(0)b(\underline M)\)), write
\[\underline M_n\coloneqq\operatorname*{ess\,inf}_{q\in\left[0,1\right]}M_n(q).
\]

We claim that \(\limsup_n\underline M_n\le\underline M\). Fix \(\varepsilon>0\). By the definition of essential infimum, the set
\[
        A_\varepsilon\coloneqq
        \{q\in\left[0,1\right]:M(q)<\underline M+\varepsilon\}
\]
has positive measure. If, along a subsequence, \(\underline M_n\ge\underline M+2\varepsilon\), then \(M_n(q)\ge\underline M+2\varepsilon\) a.e., and, hence,
\(|M_n(q)-M(q)|\ge\varepsilon\) for a.e. \(q\in A_\varepsilon\). This contradicts \(M_n\to M\) in \(L^1\). Therefore, \(\limsup_n\underline M_n\le\underline M\). Since \(b\) is increasing and continuous and \(L(0)\ge0\),
\[\tag{\(A5\)}\label{eq:5}\limsup_n L(0)b(\underline M_n)\le L(0)b(\underline M).
\]
Combining \eqref{eq:3}, \eqref{eq:4} and \eqref{eq:5},
\(\limsup_nW(M_n)\le W(M)\), so \(M\) attains the supremum over \(\mathcal M\).
\end{proof}

\subsection{Proof of \texorpdfstring{\Cref{thm:generic-coarseness}}{theorem}}
\begin{proof}[Proof of \Cref{thm:generic-coarseness}]
Under \(M^\varepsilon\),
\[K^\varepsilon = L(\varepsilon)\left[b(x(\varepsilon))-b(m_\varepsilon)\right] + \int_\varepsilon^1L(q)b'(x(q))x'(q)dq.\]
Since \(K^F=\int_0^1L(q)b'(x(q))x'(q)dq\),
\[K^F-K^\varepsilon = \int_0^\varepsilon L(q)b'(x(q))x'(q)dq - L(\varepsilon)\left[b(x(\varepsilon))-b(m_\varepsilon)\right].\]

Choose \(\eta>0\) such that \((1+\eta)^3/2<3/4\). Set \(A(q)\coloneqq L(q)b'(x(q))x'(q)\), and let \(A_0=A(0)>0\). The continuity of \(A\), \(L\), \(x'\), and \(b'\) grants that there exists \(\bar\varepsilon_1>0\) such that
\begin{enumerate}[noitemsep]
    \item\label{it:41} \(A(q)\ge \frac{3}{4}A_0\) for every \(q\in\left[0,\bar\varepsilon_1\right]\);
    \item\label{it:42} \(L(q)\le(1+\eta)L(0)\) and \(x'(q)\le(1+\eta)x'(0)\) for every \(q\in\left[0,\bar\varepsilon_1\right]\); and
    \item\label{it:43} \(b'(y)\le (1+\eta)b'(x(0))\) for every \(y\in\left[x(0),x(\bar\varepsilon_1)\right]\).
\end{enumerate}
From \ref{it:41},
\[\int_0^\varepsilon L(q)b'(x(q))x'(q)dq \ge \frac{3}{4}A_0\varepsilon.
\]

Since \(m_\varepsilon\in\left[x(0),x(\varepsilon)\right]\), using \ref{it:43} and the mean value theorem,
\[b(x(\varepsilon))-b(m_\varepsilon) \leq (1+\eta)b'(x(0))\left[x(\varepsilon)-m_\varepsilon\right].\]
Moreover, from \ref{it:42}
\[
        x(\varepsilon)-m_\varepsilon
        =
        \frac{1}{\varepsilon}\int_0^\varepsilon\left[x(\varepsilon)-x(q)\right]dq
        \le
        \frac{1}{\varepsilon}\int_0^\varepsilon(\varepsilon-q)(1+\eta)x'(0)dq
        =
        \frac{1}{2}(1+\eta)x'(0)\varepsilon.
\]
Therefore,
\[
        L(\varepsilon)\left[b(x(\varepsilon))-b(m_\varepsilon)\right]
        \le
        \frac{1}{2}(1+\eta)^3A_0\varepsilon
        <
        \frac{3}{4}A_0\varepsilon.
\]
Setting \(c\coloneqq \left(\frac{3}{4}-\frac{(1+\eta)^3}{2}\right)A_0>0\), we conclude that \(K^F-K^\varepsilon\ge c\varepsilon\) for every \(\varepsilon\in\left(0,\bar\varepsilon_1\right)\).

On the other hand, the receiver's value difference is
\[R^F-R^\varepsilon = \int_0^\varepsilon r(x(q))dq-\varepsilon r(m_\varepsilon).\]
Let \(M_r\coloneqq\sup_{m\in x\left(\left[0,1\right]\right)}\left|r''(m)\right|\), and let \(C_x\) be a Lipschitz constant for \(x\) on \(\left[0,1\right]\). Taylor's theorem delivers, for every \(q\in\left[0,\varepsilon\right]\),
\[\left|r(x(q))-r(m_\varepsilon)-r'(m_\varepsilon)(x(q)-m_\varepsilon)\right|
        \le
        \frac{M_r}{2}(x(q)-m_\varepsilon)^2.
\]
The linear term (\(r'(m_\varepsilon)(x(q)-m_\varepsilon)\)) integrates to zero by the definition of \(m_\varepsilon\). Since \(\left|x(q)-m_\varepsilon\right|\le C_x\varepsilon\) on \(\left[0,\varepsilon\right]\),
\[
        \left|R^F-R^\varepsilon\right|
        \le
        \frac{M_rC_x^2}{2}\varepsilon^3.
\]
Choose \(\bar\varepsilon\in\left(0,\bar\varepsilon_1\right)\) such that \(M_rC_x^2\varepsilon^2/2<c\) for every \(\varepsilon\in\left(0,\bar\varepsilon\right)\). Then
\[
        \left(R^\varepsilon-K^\varepsilon\right)-\left(R^F-K^F\right)
        =
        \left(K^F-K^\varepsilon\right)-\left(R^F-R^\varepsilon\right)
        \ge
        c\varepsilon-\frac{M_rC_x^2}{2}\varepsilon^3>0.\qedhere
\]
\end{proof}

\subsection{Proof of \texorpdfstring{\Cref{prop:lower-censorship}}{proposition}}
\begin{proof}[Proof of \Cref{prop:lower-censorship}]
Fix \(0<a<c\le1\), and write \(m=m(a,c)\). For each \(q\in\left[a,c\right]\), the function
\(y\mapsto r(y)+L'(q)b(y)\) is convex by assumption. Hence,
\[
r(x(q))+L'(q)b(x(q))-r(m)-L'(q)b(m)\ge\left[r'(m)+L'(q)b'(m)\right]\left[x(q)-m\right].
\]
Integrating over \(\left[a,c\right]\) yields
\[\int_a^c\left[r(x(q))+L'(q)b(x(q))\right]dq-G(a,c) \ge b'(m)\int_a^cL'(q)\left[x(q)-m\right]dq,\]
because \(\int_a^c\left[x(q)-m\right]dq=0\). Moreover,
\[
\int_a^cL'(q)\left[x(q)-m\right]dq
=\frac{1}{2\left(c-a\right)}\int_a^c\int_a^c\left[L'(q)-L'(t)\right]\left[x(q)-x(t)\right]dqdt\ge0,
\]
since \(L'\) and \(x\) are increasing. Therefore,
\[
G(a,c)\le\int_a^c\left[r(x(q))+L'(q)b(x(q))\right]dq
\]
for every \(0<a<c\le1\).

Now take any finite classification with cutoffs \(0=q_0<q_1<\cdots<q_N=1\). By the finite-classification accounting,
\[
W(M)=H(q_1)+\sum_{j=2}^NG(q_{j-1},q_j).
\]
Applying the preceding inequality to every noninitial cell,
\[
\begin{split}
W(M)
&\le H(q_1)+\sum_{j=2}^N\int_{q_{j-1}}^{q_j}\left[r(x(q))+L'(q)b(x(q))\right]dq\\
&=H(q_1)+\int_{q_1}^1\left[r(x(q))+L'(q)b(x(q))\right]dq\\
&\le H(c^\ast)+\int_{c^\ast}^1\left[r(x(q))+L'(q)b(x(q))\right]dq.
\end{split}
\]
The last expression is the welfare of the classification that pools \(\left[0,c^\ast\right]\) and fully certifies all higher quantiles. Thus, no finite classification improves on it.

It remains to extend the bound from finite classifications to arbitrary admissible signals. Let \(M\in\mathcal M\). For each \(n\), partition the essential range of \(M\) into finitely many intervals with mesh tending to zero, let \(\sigma_n\) be the finite sigma-algebra generated by their preimages, and set
\[
M_n\coloneqq E[x\mid\sigma_n]=E[M\mid\sigma_n].
\]
The equality follows from \(M=E[x\mid M]\) and \(\sigma_n\subseteq\sigma(M)\). Since \(M\) is nondecreasing, the preimages of the range bins are quantile intervals up to null sets, so each \(M_n\) is a finite monotone classification. Hence, the finite-classification argument delivers
\[
W(M_n)\le H(c^\ast)+\int_{c^\ast}^1\left[r(x(q))+L'(q)b(x(q))\right]dq
\]
for every \(n\). As the range mesh tends to zero, \(M_n\to M\) in \(L^1\). Since \(b\) and \(r\) are Lipschitz on \(x\left(\left[0,1\right]\right)\), and \(L'\) is bounded, the integral terms in welfare converge. Choosing the bottom range bin with mesh tending to zero also makes the endpoint term converge. Accordingly, \(W(M_n)\to W(M)\), and so the same bound holds for \(M\). Thus, the proposed classification is optimal.
\end{proof}

\subsection{Proof of \texorpdfstring{\Cref{prop:genval}}{proposition}}
\begin{proof}[Proof of \Cref{prop:genval}]
The mean of \(\mu_\varepsilon\) is \(m_\varepsilon\), so the sender side, and, thus, implementation costs are unchanged. The cost comparison in the proof of \Cref{thm:generic-coarseness} provides \(c>0\) and \(\bar{\varepsilon}_1>0\) such that \(K^F-K^\varepsilon\ge c\varepsilon\) for every \(\varepsilon\in(0,\bar{\varepsilon}_1)\).

Concordantly, we need only to bound the receiver term. We have
\[
R^F-R^\varepsilon=\int_0^\varepsilon V(\delta_{x(q)})dq-\varepsilon V(\mu_\varepsilon).
\]
Since \(x\) is continuous, \(\mu_\varepsilon\) converges weakly to \(\delta_{x(0)}\). Indeed, for every continuous \(\varphi\colon x\left(\left[0,1\right]\right)\to\mathbb R\),
\[
\int\varphi d\mu_\varepsilon=\frac{1}{\varepsilon}\int_0^\varepsilon\varphi(x(q))dq\to\varphi(x(0))=\int\varphi d\delta_{x(0)}.
\]
Also, if \(\varepsilon_n\downarrow0\) and \(q_n\in[0,\varepsilon_n]\), then \(q_n\to0\), so \(\delta_{x(q_n)}\) converges weakly to \(\delta_{x(0)}\). As \(V\) is continuous at \(\delta_{x(0)}\), there exists \(\bar{\varepsilon}_2>0\) such that, for every \(\varepsilon\in(0,\bar{\varepsilon}_2)\),
\[
\sup_{q\in[0,\varepsilon]}\left|V(\delta_{x(q)})-V(\delta_{x(0)})\right|+\left|V(\mu_\varepsilon)-V(\delta_{x(0)})\right|\le\frac{c}{2}.
\]

Therefore, for every \(\varepsilon\in(0,\bar{\varepsilon}_2)\),
\[
\left|R^F-R^\varepsilon\right|\le\varepsilon\sup_{q\in[0,\varepsilon]}\left|V(\delta_{x(q)})-V(\delta_{x(0)})\right|+\varepsilon\left|V(\mu_\varepsilon)-V(\delta_{x(0)})\right|\le\frac{c}{2}\varepsilon.
\]
Set \(\bar{\varepsilon}\coloneqq\min\{\bar{\varepsilon}_1,\bar{\varepsilon}_2\}\). Then, for every \(\varepsilon\in(0,\bar{\varepsilon})\),
\[
(R^\varepsilon-K^\varepsilon)-(R^F-K^F)=(K^F-K^\varepsilon)-(R^F-R^\varepsilon)\ge K^F-K^\varepsilon-\left|R^F-R^\varepsilon\right|\ge\frac{c}{2}\varepsilon>0.
\]
Thus, full certification is not welfare optimal.
\end{proof}

\subsection{Proof of \texorpdfstring{\Cref{prop:postanalog}}{proposition}}
\begin{proof}[Proof of \Cref{prop:postanalog}]
As \(V\) is bounded and continuous, \(\Phi_V\) is continuous, so it has a maximizer. Fix \(0<a<c\le1\). Easily,
\(c-a)V(\mu_{a,c})\le\int_a^cV(\delta_{x(q)})dq\), due to \(V\)'s convexity. It remains to compare the sender terms. Since \(L(1)=0\), \(L\ge0\), and \(L''\ge0\), we have \(L'\le0\). Concavity of \(b\) delivers
\(b(x(q))-b(m)\le b'(m)[x(q)-m]\). Multiplying by \(L'(q)\le0\) and integrating over \([a,c]\) yields
\[
\int_a^cL'(q)b(x(q))dq-\left[L(c)-L(a)\right]b(m)\ge b'(m)\int_a^cL'(q)[x(q)-m]dq.
\]
Moreover,
\[
\int_a^cL'(q)[x(q)-m]dq=\frac{1}{2(c-a)}\int_a^c\int_a^c\left[L'(q)-L'(t)\right]\left[x(q)-x(t)\right]dqdt\ge0,
\]
because \(L'\) and \(x\) are increasing. Hence,
\[
(c-a)V(\mu_{a,c})+\left[L(c)-L(a)\right]b(m(a,c))\le\int_a^c\left[V(\delta_{x(q)})+L'(q)b(x(q))\right]dq.
\]

Now take any finite classification with cutoffs \(0=q_0<q_1<\cdots<q_N=1\). Its welfare is
\[
W_V(M)=H_V(q_1)+\sum_{j=2}^N\left[(q_j-q_{j-1})V(\mu_{q_{j-1},q_j})+\left[L(q_j)-L(q_{j-1})\right]b(m(q_{j-1},q_j))\right].
\]
Applying the preceding inequality to every noninitial cell,
\[
W_V(M)\le H_V(q_1)+\int_{q_1}^1\left[V(\delta_{x(q)})+L'(q)b(x(q))\right]dq=\Phi_V(q_1)\le\Phi_V(c^*).
\]
The right-hand side is the welfare of the classification that pools \([0,c^*]\) and fully certifies all higher quantiles.

The extension from finite classifications to arbitrary admissible classifications follows by the same approximation used in the proof of \Cref{prop:lower-censorship}. Take a nested sequence of finite range coarsenings of the induced posterior-mean process, with mesh tending to zero, and let \(M_n\) be the corresponding finite classification. The induced posteriors converge weakly a.s. to the original induced posteriors. Indeed, if \(\sigma_n\) is the sigma-algebra generated by the \(n\)th coarsening, then \(\sigma_n\uparrow\sigma(M)\), so L\'evy's upward theorem \citep[Theorem~4.6.8]{Durrett2019} implies that \(\E[\varphi(x(Q))\mid\sigma_n]\to\E[\varphi(x(Q))\mid\sigma(M)]\) a.s. for each \(\varphi\) in a countable convergence-determining class of continuous functions on \(x([0,1])\). Since \(V\) is bounded and continuous, the receiver term converges by dominated convergence. The sender term converges exactly as in the proof of \Cref{prop:lower-censorship}. Accordingly, \(W_V(M_n)\to W_V(M)\), so the finite-classification bound passes to \(M\) and lower-censorship at \(c^*\) is optimal.
\end{proof}

\subsection{Proof of \texorpdfstring{\Cref{lem:finite-refinement-pooling}}{lemma}}
\begin{proof}[Proof of \Cref{lem:finite-refinement-pooling}]
Consider the \(\left[a,c\right]\) case. We proceed via induction on \(N\). The case \(N=1\) is equality, and the case \(N=2\) is exactly the first inequality in \Cref{ass:5}. For \(N>2\),
\[
\begin{split}
        \sum_{j=1}^NG(t_{j-1},t_j)-G(a,c)
        &=
        \left[G(a,t_1)+G(t_1,c)-G(a,c)\right] \\
        &\qquad+
        \left[\sum_{j=2}^NG(t_{j-1},t_j)-G(t_1,c)\right].
\end{split}
\]
The first bracket is strictly negative by \Cref{ass:5}, and the second is weakly negative by the induction hypothesis applied to \(\left[t_1,c\right]\).

The \(\left[0,h\right]\) case follows the same logic.\end{proof}

\subsection{Proof of \texorpdfstring{\Cref{lem:small-block-pooling}}{lemma}}

\begin{proof}[Proof of \Cref{lem:small-block-pooling}]
First consider \(I=\left[a,c\right]\), with \(a>0\). Let \(M\) be an admissible refinement of \(I\). For each \(n\), form a finite-valued signal \(M_n\) by grouping the values of \(M\) into intervals of length at most \(1/n\), and replacing \(M\) on each group by the average quality of the types in that group. Since \(M=\E_I[x\mid M]\), this is the same as replacing \(M\) on each group by the average value of \(M\) on that group. Since \(M\) is nondecreasing, each group is an interval up to null sets. Thus, \(M_n\) is a finite interval refinement of \(I\).

\Cref{lem:finite-refinement-pooling} delivers \(\mathcal W_I(M_n)\le G(a,c)\). Moreover, on each group, both \(M\) and \(M_n\) lie in an interval of length at most \(1/n\), so \(M_n\to M\) in \(L^1\). Since \(b\) and \(r\) are Lipschitz on \(x\left(\left[0,1\right]\right)\) and \(L'\) is bounded, \(\mathcal W_I(M_n)\to\mathcal W_I(M)\). Hence, \(\mathcal W_I(M)\le G(a,c)\).

If \(M\) is nonconstant, choose a value \(y\) such that both \(\left\{q\in I \colon M(q)\le y\right\}\) and \(\left\{q\in I\colon M(q)>y\right\}\) have positive measure. Since \(M\) is nondecreasing, these sets are intervals up to null sets; write them as \(\left[a,d\right]\) and \(\left[d,c\right]\), with \(a<d<c\). Thus,
\[
        \mathcal W_I(M) \leq G(a,d)+G(d,c)<G(a,c),
\]
where the first inequality is the one we proved in the previous paragraph, applied to each piece, and the second inequality is \Cref{ass:5}.

The proof for the bottom interval \([0,h]\) follows the same logic.
\end{proof}

\subsection{Proof of \texorpdfstring{\Cref{thm:finite-partitions}}{theorem}}
\begin{proof}[Proof of \Cref{thm:finite-partitions}]
Let \(M^*\) be optimal, and suppose first that \(M^*\) has infinite essential range. Then, for every \(N\), we can coarsen the essential range of \(M^*\) into \(N\) (nonnull) intervals. Choose \(N>1+2/\delta\). Since \(M^*\) is nondecreasing, the preimages of these intervals are quantile intervals up to null sets--write their cutoffs as \(0=q_0<q_1<\cdots<q_N=1\). If every adjacent pair had combined length at least \(\delta\), then
\[
        (N-1)\delta
        \le
        \sum_{j=1}^{N-1}\left[\left(q_j-q_{j-1}\right)+\left(q_{j+1}-q_j\right)\right]
        \le
        2,
\]
contradicting \(N>1+2/\delta\). Hence, some adjacent pair has union length below \(\delta\).

Pooling this pair \(\left[a,c\right]\), Bayes plausibility produces
\[
        m(a,c)=\frac{1}{c-a}\int_a^cx(q)dq=\frac{1}{c-a}\int_a^cM^*(q)dq.
\]
Thus, \(m(a,c)\) lies between the essential infimum and essential supremum of \(M^*\) on the block. Since the block is an interval in the ordered essential range of \(M^*\), replacing \(M^*\) on the block by \(m(a,c)\) preserves monotonicity. If \(a=0\), the bottom part of \Cref{lem:small-block-pooling} applies; otherwise, the noninitial part applies. Pooling strictly raises welfare, contradicting optimality. Thus, \(M^*\) cannot have infinite essential range.

Now suppose \(M^*\) has finite essential range with \(N\) values. If \(N>1+2/\delta\), the same counting argument finds an adjacent pair whose union has length below \(\delta\), and \Cref{lem:small-block-pooling} again implies that pooling the pair strictly raises welfare. Therefore, \(N\le1+2/\delta\).
\end{proof}

\subsection{Proof of \texorpdfstring{\Cref{prop:contest}}{proposition}}
\begin{proof}[Proof of \Cref{prop:contest}]
Set \(L_0\coloneqq L(0) > 0\) (\(\Cref{ass:scalar}\)). Under full certification, a candidate of quantile \(q\) wins exactly when all \(n-1\) opponents have lower quantiles. Thus, the induced private prize is \(sq^{n-1}\). Under \(M^\varepsilon\), the bottom label receives prize \(s/n\) exactly when all \(n-1\) opponents also have bottom labels, and receives zero otherwise. Thus, the bottom label has private prize \(s\varepsilon^{n-1}/n\). Every separated quantile \(q>\varepsilon\) still has private prize \(sq^{n-1}\). Applying the cost formula to this induced private-prize process, therefore, proffers
\[
\frac{K^F-K^\varepsilon}{n}=s(n-1)\int_0^\varepsilon L(q)q^{n-2}dq-\frac{s(n-1)}{n}L(\varepsilon)\varepsilon^{n-1}.
\]

Choose \(\eta\coloneqq1/\left(2\left(2n-1\right)\right)\). As \(L\) is continuous at zero, there exists \(\varepsilon_1>0\) such that, for every \(q\in\left[0,\varepsilon_1\right]\),
\[
\left(1-\eta\right)L_0\le L(q)\le\left(1+\eta\right)L_0.
\]
Hence, for every \(\varepsilon\in\left(0,\varepsilon_1\right)\),
\[
\begin{split}
\frac{K^F-K^\varepsilon}{n}
&\ge s(n-1)\left(1-\eta\right)L_0\int_0^\varepsilon q^{n-2}dq-\frac{s(n-1)}{n}\left(1+\eta\right)L_0\varepsilon^{n-1}\\
&=sL_0\left[\frac{1}{n}-\eta\frac{2n-1}{n}\right]\varepsilon^{n-1} = \frac{sL_0}{2n}\varepsilon^{n-1},
\end{split}
\]
and so \(K^F-K^\varepsilon\ge\frac{sL_0}{2}\varepsilon^{n-1}\).

It remains to bound the receiver-value loss. Let \(C\ge1\) be a Lipschitz constant for \(r\circ x\) on \(\left[0,1\right]\). If at least one candidate has quantile above \(\varepsilon\), the same quantile is selected almost surely under \(M^F\) and \(M^\varepsilon\): among candidates above \(\varepsilon\), the highest quantile is still fully certified and beats every bottom label. Accordingly, the selected posterior mean can differ only on the event that all \(n\) candidates lie in \(\left[0,\varepsilon\right]\), which has probability \(\varepsilon^n\).

On that event, let \(q^\ast\) denote the full-certification winner. Then \(q^\ast\in\left[0,\varepsilon\right]\). Under bottom pooling, the selected posterior mean is \(m_\varepsilon\). Since \(m_\varepsilon\in\left[x(0),x(\varepsilon)\right]\), there is \(q_\varepsilon\in\left[0,\varepsilon\right]\) such that \(m_\varepsilon=x(q_\varepsilon)\). Consequently,
\[
\left|r(x(q^\ast))-r(m_\varepsilon)\right|=\left|r(x(q^\ast))-r(x(q_\varepsilon))\right|\le C\varepsilon,\] and so
\(\left|R^F-R^\varepsilon\right|\le C\varepsilon^{n+1}\).

Finally, set
\(\bar{\varepsilon}\coloneqq\min\left\{\varepsilon_1,\left(\frac{sL_0}{4C}\right)^{1/2}\right\}\) so that for every \(\varepsilon\in\left(0,\bar{\varepsilon}\right)\),
\[\left(R^\varepsilon-K^\varepsilon\right)-\left(R^F-K^F\right) \ge\left(K^F-K^\varepsilon\right)-\left|R^F-R^\varepsilon\right| \ge\frac{sL_0}{2}\varepsilon^{n-1}-C\varepsilon^{n+1} > 0. \qedhere\]
\end{proof}

\bibliography{sample}

\end{document}